\newcommand{\change}[1]{\textcolor{black}{#1}}
\newcommand{\etal}{{et~al.}}
\pgfplotsset{compat=1.15} 
\newtheorem{theorem}{Theorem}
\begin{document}
	
	\title{\textsc{Experiments with Unit Disk Cover Algorithms for Covering Massive Pointsets}\thanks{A preliminary version of this paper appeared in the Proceedings of the  International Symposium on Experimental Algorithms, Springer, Cham, 2019~\cite{ghosh2019unit}. Research on this paper was fully supported by the University of North Florida Academic Technology Grant, and partially by the NSF Grant CCF-1947887.}}

	\author{Rachel Friederich \and Anirban Ghosh    \and
		Matthew Graham \and Brian Hicks \and Ronald Shevchenko 
	}
	
	\author{
		Rachel Friederich \\
		School of Computing \\
		University of North Florida\\
		Email: \texttt{n01140328@unf.edu}           
		\and
		Anirban Ghosh \\
			School of Computing \\
			University of North Florida\\
			Email: \texttt {anirban.ghosh@unf.edu}      
			\and
		Matthew Graham \\
		School of Computing \\
		University of North Florida\\
			Email: \texttt{n00612546@unf.edu} 
			\and
		Brian Hicks \\
		School of Computing \\
		University of North Florida\\
			Email: \texttt{n00133251@unf.edu}     
			\and
		Ronald Shevchenko \\
		School of Computing \\
		University of North Florida\\
			Email: \texttt{n01385011@unf.edu}  
	}

\maketitle

\begin{abstract}
	
Given a set of $n$ points in the plane, the Unit Disk Cover (UDC) problem asks to compute the minimum number of unit disks required to cover the points, along with a placement of the disks. The problem is NP-hard and several approximation algorithms have been designed over the last three decades. In this paper, we have engineered and experimentally compared practical performances of some of these algorithms on massive pointsets. The goal is to investigate which algorithms run fast and give good approximation in practice. 

We  present a simple $7$-approximation algorithm for UDC that runs in $O(n)$ expected time and uses $O(s)$ extra space, where $s$ denotes the size of the generated cover. In our experiments, it turned out to be the speediest of all. We also present two heuristics to reduce the sizes of  covers generated by it without slowing it down by much.

To our knowledge, this is the first work that experimentally compares geometric covering algorithms. Experiments with them using massive pointsets (in the order of millions) throw light on their practical uses. We share the engineered algorithms via \textsf{GitHub}\footnote{\url{https://github.com/ghoshanirban/UnitDiskCoverAlgorithms}} for broader uses and future research in the domain of geometric optimization.  
\end{abstract}


\section{Introduction} \label{sec:intro}

Geometric covering is a well-researched family of fascinating optimization problems in computational geometry and has been studied for decades. To date, research has been confined mostly to the theoretical arena only. Among these problems, the {Unit Disk Cover} (UDC) problem has turned out to be one of the fundamental covering problems. Given a set $P$ of $n$ points $p_1,\ldots,p_n$ in the Euclidean plane, the UDC problem asks to compute the minimum number of possibly intersecting unit disks (closed disks of unit radius)  required to cover the points in $P$, along with a placement of the disks. See Fig.~\ref{fig:f2} for an example. Since the algorithms for UDC can be easily scaled for covering points using disks of any fixed radius $r>0$, for the sake of brevity, we use $r=1$.

 The UDC problem has interesting applications in wireless networking, facility location, robotics, image processing, and machine learning. For instance, $P$ can be perceived as a set of clients or locations of interest seeking service from service providers, which can be modeled using a set of fixed-radius disks. The goal is to provide service or cover these locations using the minimum number of service providers. 
 
 \begin{figure}[h]
    \centering
    \includegraphics[scale=0.8]{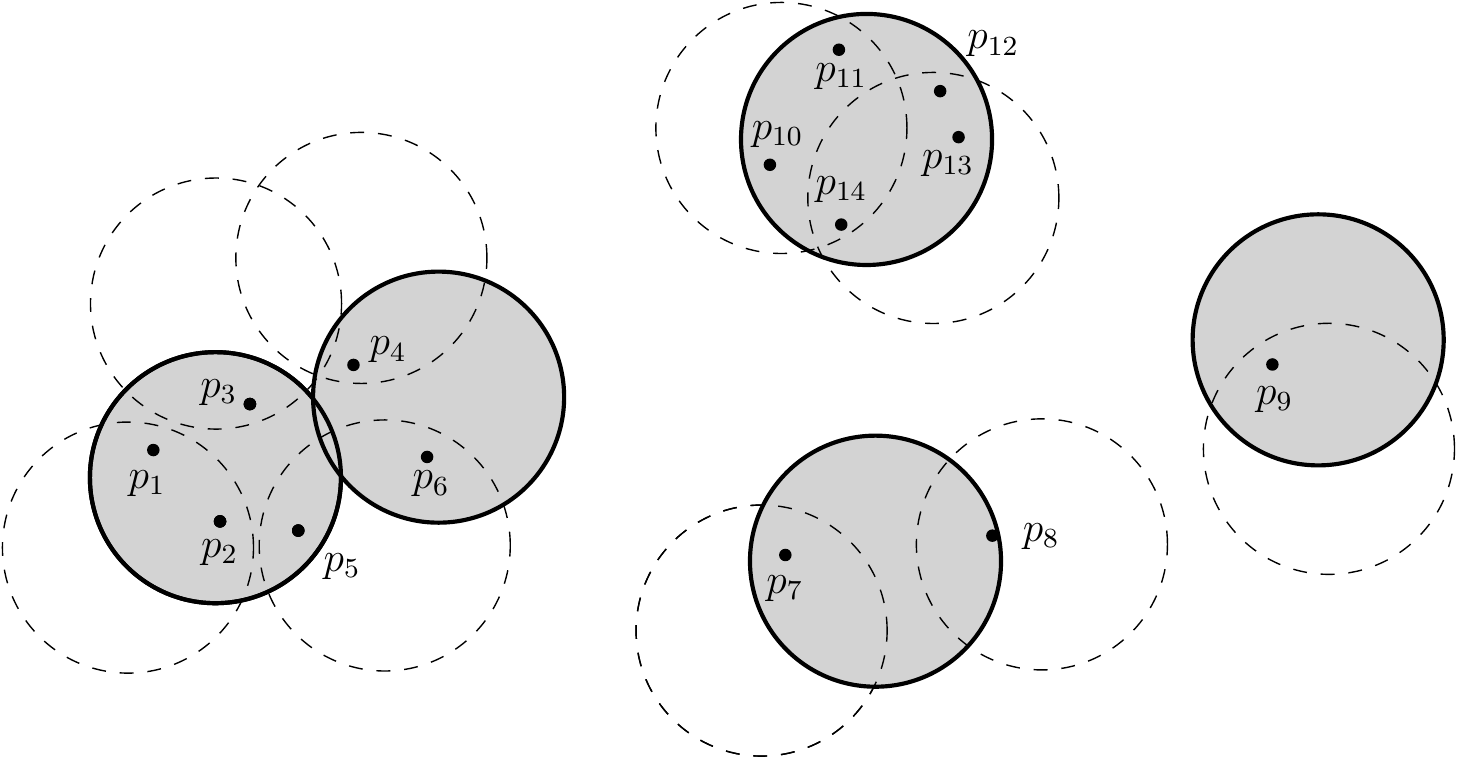}
    \caption{Any optimal solution for the pointset $P=\{p_1,\ldots,p_{14}\}$ contains exactly $5$ disks; an optimal solution for $P$ is shown using gray disks. A non-optimal solution is shown using a set of $9$ dashed disks.}
    \label{fig:f2}
\end{figure}

The UDC problem has a long history. Back in 1981, UDC was shown to be NP-hard by Fowler~\cite{fowler1981optimal}. The first known approximation algorithm for UDC is a PTAS designed by Hochbaum and Maass \cite{hochbaum1985approximation} that runs in $O(\ell^4(2n)^{4\ell^2+1})$ time having an approximation factor of $(1 + \frac{1}{\ell})^2$, for any integer $\ell\geq 1$. Gonzalez~\cite{gonzalez1991covering} presented two approximation algorithms; a $2(1+\frac{1}{\ell})$-approximation algorithm that runs in $O(\ell^2n^7)$ time, where $\ell$ is a positive integer and another $8$-approximation algorithm with a runtime of $O(n\log |\text{OPT}|)$, where $|\text{OPT}|$ is the number of disks in an optimal cover. Charikar, Chekuri, Feder, and Motwani~\cite{charikar2004incremental} devised a $7$-approximation algorithm for the UDC problem (the authors used the name \textsc{Dual Clustering} for this problem). A $O(1)$-approximation algorithm with a runtime of $O(n^3\log n)$ is presented by Br{\"o}nnimann and Goodrich \cite{bronnimann1995almost}. 
Franceschetti, Cook, and Bruck~\cite{franceschetti2001geometric}~developed an algorithm with an approximation factor of $3(1+\frac{1}{\ell})^2$ having a runtime of $O(Kn)$, where $\ell$ is a positive integer and $K$ is a constant that depends on $\ell$. A $2.8334$-approximation algorithm is designed by Fu, Chen, and Abdelguerfi~\cite{fu2007almost} that runs in $O(n(\log n \log \log n)^2)$ time. 
Liu and Lu~\cite{liu2014fast} designed a $25/4$-approximation algorithm having a runtime of  $O(n\log n)$. Biniaz, Liu, Maheshwari, and Smid~\cite{biniaz2017approximation}~devised a $4$-approximation algorithm that has a runtime of $O(n\log n)$. Recently, Dumitrescu, Ghosh, and T{\'o}th~\cite{dumitrescu2020online} have designed an online $5$-approximation\footnote{In the literature of online algorithms, the term \emph{competitive ratio} is used instead of approximation factor.} algorithm for the problem. 

In the era of Big Data, the sizes of spatial data sets are growing exponentially. Thus, finding good quality solutions efficiently for NP-hard geometric optimization problems has posed a new challenge for algorithm engineers. In this regard, because of the practical importance of the UDC problem, we believe that it is worthwhile to investigate which algorithms designed for UDC are the best for processing massive pointsets in practice. 

Covering problems involving points and disks are well-studied in computational geometry; see for instance,~\cite{aloupis2012covering,  agarwal2014near, bar2013note, chazelle1986circle,das2012discrete, de2009covering, dumitrescu2018computational, kaplan2011optimal, liao2010polynomial, guo2021geometric}. Bus, Mustafa, and Ray designed a practical algorithm for the geometric hitting set problem; see~\cite{bus2018practical}. The UDC problem has also been considered in the streaming setup by Liaw, Liu, and Reiss~\cite{liaw2017}.

\paragraph{Our contributions.} For our experiments, we have implemented the following algorithms; appropriate abbreviations using the authors' names and dates of publication are used for naming purposes.
\begin{enumerate}   \itemsep0pt
 
    \item G-1991 by Gonzalez (1991)~\cite{gonzalez1991covering}
    \item CCFM-1997 by Charikar, Chekuri, Feder, and Motwani (1997)~\cite{charikar2004incremental}
   
    \item LL-2014 by Liu and Lu (2014)~\cite{liu2014fast}
    \item BLMS-2017 by Biniaz, Liu, Maheshwari, and Smid (2017)~\cite{biniaz2017approximation}
    \item DGT-2018 by Dumitrescu, Ghosh, and T\'{o}th (2018)~\cite{dumitrescu2020online}
\end{enumerate}
We have refrained from implementing the algorithms from~\cite{bronnimann1995almost,fu2007almost, hochbaum1985approximation, franceschetti2001geometric} since they are not practical and mainly of theoretical interest.

We present a simple $7$-approximation algorithm named \textsc{FastCover} that runs in $O(n)$ expected time; see Section~\ref{GHS}. In our experiments, we found \textsc{FastCover} to be the fastest of all. We also present two heuristics that effectively help  to reduce the sizes of  covers generated by it. \textsc{FastCover} with the first heuristic included is named  \textsc{FastCover}\texttt{+} and the one in which both the heuristics are included is named  \textsc{FastCover}\texttt{++}. To our surprise, we found that in some cases \textsc{FastCover}\texttt{++}  could beat some of the sophisticated algorithms in speed and solution quality simultaneously. These three versions of \textsc{FastCover} behave like three optimization levels for the algorithm, where  \textsc{FastCover} being the fastest and  \textsc{FastCover}\texttt{++} the slowest in practice. Wherever possible, \textsc{FastCover}\texttt{++} produces the smallest covers among these three. 

In our experiments, we have used both synthetic and real-world massive pointsets. The largest pointset used in the experiments contains $\approx10.8$ million points. 
The algorithms are implemented in C\texttt{++}17 using the CGAL 5.3 library~\cite{cgal:eb-21b}. 
For broader uses of these algorithms, we share our code via \textsf{GitHub}. 

In our knowledge, this is the first work that experimentally compares the existing algorithms for UDC. Experiments with them using massive pointsets throw light on their practical uses. 

In Section~\ref{sec:algorithms}, we discuss the algorithms implemented in this paper along with the \textsc{FastCover} algorithm. In Sections~\ref{sec:exp}, we present our experimental results including tables and plots.  In Section~\ref{sec:con}, we present our recommendations and conclusions.  

\paragraph{Notations and terminology.} We denote a point $p \in \mathbf{R}^2$ using a pair of real numbers $(a,b)$. By $p_x$ and $p_y$, we denote its $x$ and $y$-coordinates, respectively. A \emph{unit ball} is a closed ball of unit radius in $\mathbf{R}^d$. In the plane, we use the term \emph{unit disk}. 

We define the \emph{point density} of  $P$ as the ratio of its size to that of the area of its bounding box.

\section{ Algorithms engineered} \label{sec:algorithms}

In this section, we briefly describe the algorithms we have engineered and provide their pseudocodes along with their asymptotic runtimes. 
To see how the algorithms behave differently, refer to Section~\ref{sec:demo}, where we present the covers generated by the algorithms engineered in this work when run on a $60$-element pointset drawn randomly from a $20 \times 20$ square.

\subsection{G-1991: Gonzalez (1991)}

Gonzalez~\cite{gonzalez1991covering} presented two algorithms for UDC in $d$-space. One of these two algorithms is a PTAS that uses the shifting strategy introduced in~\cite{hochbaum1985approximation}.  This PTAS has an approximation factor of $2(1+\frac{1}{l})^{d-1}$ and runs in $O(\ell^{d-1}d(2\sqrt{d})(\ell\sqrt{d})^{d-1}n^{d(2\sqrt{d})^{d-1}+1})$ time, for every integer $\ell \geq 1$. In the plane, this algorithm has an approximation factor of $2(1+\frac{1}{l})$ and runs in $O(\ell^2 n^{4\sqrt{2}+1})$ time. We did not implement this algorithm due to its high asymptotic runtime. 

\begin{algorithm}[ht]
	\caption{\textsc{: G-1991($P$)}} 
	\label{alg:G-1991} 
	
	\begin{algorithmic}[1] 
		\STATE Let $P_1 := \{p \in P | i_y(p)\text{ is odd} \}$ and $P_2 := \{p \in P | i_y(p)\text{ is even} \}$. Execute the lines 2-12 independently on $P_1$ and $P_2$. The final solution is the union of these two solutions;
		\newline
		\STATE Partition $P_i$ ($i$ is either 1 or 2) w.r.t $i_x(p)$ into sets $S:=S_1,\ldots,S_k$;
		\STATE $R \leftarrow S_1 \cup S_2$; 
		\STATE $j\leftarrow 2$;
		\WHILE{$R\neq \emptyset$}
		\STATE $q \leftarrow \min\{ p_x ~|~p \in R \}$;
		\STATE Let $Q$ be the set of points in $R$ at a distance $\leq \sqrt{2}$ (w.r.t $x$ only) from $q$, $R \leftarrow R \setminus Q$;
		\STATE Include the $\sqrt{2} \times \sqrt{2}$ square in the solution whose left boundary includes $q$ and whose top boundary coincides with the top boundary of the slab having height $D$ that contains $q$;
		\WHILE{$j<k$ \textbf{and} $R$ contains elements from at most one of the sets in $S$}
		\STATE $j \leftarrow j+1$; 
		\STATE $R \leftarrow R \cup S_j$;
		\ENDWHILE
		\ENDWHILE
		\STATE  For every $\sqrt{2} \times \sqrt{2}$ square in the solution, place a unit disk at its center;
	\end{algorithmic}
\end{algorithm}


The other algorithm G-1991, as we call it, has an approximation factor of $2^{d-1}(\lceil \sqrt{d} \rceil)^d$ and runtime of $O(dn + n \log |\text{OPT}|)$, where $|\text{OPT}|$ is the number of disks in an optimal cover. In the plane, G-1991 gives $8$-approximation and runs in $O(n \log |\text{OPT}|)$ time.
See Algorithm~\ref{alg:G-1991} for a high-level description of  G-1991. We use the following notations in the algorithm. Let $p \in P$, then $i_x(p) = \lfloor p_x/\sqrt{2}\rfloor$ and $i_y(p) = \lfloor p_y/\sqrt{2}\rfloor $.

The author presented this algorithm for covering points using axis-parallel squares of fixed size and claimed that the same can be used for UDC. In our implementation, we have used squares of length $\sqrt{2}$ to cover the points and then placed a unit disk at the center of every square. Since a square of length $\sqrt{2}$ can be inscribed inside a unit disk, every point in the input is covered using this approach.

\subsection{CCFM-1997: Charikar, Chekuri, Feder, and  Motwani (1997) }

The algorithm CCFM-1991 by Charikar~\etal~\cite{charikar2004incremental} was originally designed for the online version of UDC. The authors used name \textsc{Dual Clustering} in their paper. In $d$-space, CCFM-1991 gives an approximation factor of $O(2^d d \log d)$.  In $2$-space, CCFM-1997 has an approximation factor of $7$. Refer to Algorithm~\ref{alg:CCFM-1997} for a high-level description of CCFM-1991. No comment was made about its runtime or implementation. 

\begin{algorithm}[H]
	\caption{\textsc{: CCFM-1997($P$)}} 
	\label{alg:CCFM-1997} 
	
	\begin{algorithmic}[1] 
		\STATE Let \texttt{active-centers} and \texttt{inactive-centers} be two empty sets;
		\FOR{$p \in  P$}
		\IF{the distance to the nearest disk center in \texttt{active-centers} $>1$}
		\IF{\texttt{inactive-centers} is empty}
		\STATE Add $p$ to \texttt{active-centers} and add the following six points to \texttt{inactive-centers}: $(p_x+\sqrt{3},p_y), (p_x+\sqrt{3}/2,p_y+1.5), (p_x+\sqrt{3}/2,p_y-1.5), (p_x-\sqrt{3}/2,p_y+1.5), (p_x-\sqrt{3},p_y), (p_x-\sqrt{3}/2,p_y-1.5)$; 
		\STATE \textbf{continue};
		\ENDIF
		\IF{the distance to the nearest disk center $q$ in \texttt{inactive-centers} $\leq 1$}
		\STATE Delete $q$ from \texttt{inactive-centers} and add $q$ to \texttt{active-centers};
		\ELSE
		\STATE Add $p$ to \texttt{active-centers} and add the following six points to \texttt{inactive-centers}: $(p_x+\sqrt{3},p_y), (p_x+\sqrt{3}/2,p_y+1.5), (p_x+\sqrt{3}/2,p_y-1.5), (p_x-\sqrt{3}/2,p_y+1.5), (p_x-\sqrt{3},p_y), (p_x-\sqrt{3}/2,p_y-1.5)$; 
		\ENDIF
		\ENDIF
		\ENDFOR
		\STATE \textbf{return} \texttt{active-centers};
	\end{algorithmic}
\end{algorithm}

\subsection{LL-2014 and LL-2014-1P: Liu and Lu (2014)}

In LL-2014~\cite{liu2014fast}, the plane is divided into vertical strips of width $\sqrt{3}$ each. Inside each strip, we obtain an approximate solution by sorting the points in non-increasing order according to their $y$-coordinates. The next uncovered point inside a strip is covered by placing a disk as low as possible. These disks are placed by centering them on the vertical line that splits the strip into two. The final solution is constructed by taking the union of all the solutions obtained for the strips. This strip system is shifted five times to the right by a distance of $\sqrt{3}/6$ every time. For every shift, we obtain a solution as described above along with one solution for the initial strip configuration. The algorithm returns the best one (having the least number of disks) out of these six solutions. Refer to Algorithm~\ref{alg:LL-2014} for an algorithmic description of LL-2014. The authors show that LL-2014 has an approximation factor of $25/6 \approx 4.17$ and runs in $O(n\log n)$ time.

In our experiments, we also consider a one-pass version of this algorithm since in practice we 
find there is barely any advantage of using six passes instead of one. The authors have used six passes to reduce the approximation factor from $5$ to $25/6$.
We named this one-pass version LL-2014-1P. 
Obviously, this one-pass version is much faster in practice too. 
Since there is no substantial difference between LL-2014 and LL-2014-1P, 
we do not present separate pseudocode 
for this one-pass version.

\begin{algorithm}[h]
\caption{\textsc{: LL-2014($P$)}} 
\label{alg:LL-2014} 

\begin{algorithmic}[1] 
   \STATE $\textsc{Disk-Centers} \leftarrow \emptyset$, \texttt{min} $\leftarrow n+1$;
   
           \STATE Sort $P$ w.r.t $x$-coordinate in $O(n\log n)$ time;
   \FOR{$i \in \{0,1,2,3,4,5\}$}

        \STATE \texttt{current} $\leftarrow 1$, $C \leftarrow \emptyset$, \texttt{right} $\leftarrow P[1]_x + \frac{i\sqrt{3}}{6}$;

        \WHILE{\texttt{current} $\leq n$}
            \STATE \texttt{index} $\leftarrow$ \texttt{current};
            
            \WHILE{$P[\texttt{current}]_x <$ right \textbf{and} \texttt{current} $\leq n$}
                \STATE \texttt{current} $\leftarrow$ \texttt{current} $+$ $1$;
            \ENDWHILE
            
            \STATE \texttt{$x$-of-restriction-line} $\leftarrow$ \texttt{right} $-\sqrt{3}/2$, \texttt{segments} $\leftarrow \emptyset$;
            
            \FOR{$j \leftarrow$ \texttt{index} \textbf{to} \texttt{current}$-1$}
                \STATE $d\leftarrow P[j]_x-$ \texttt{$x$-of-restriction-line}, $y\leftarrow \sqrt{1-d^2}$;
                \STATE Create a segment $s$ having the endpoints $(x\texttt{-of-restriction-line}, P[j]_y+y)$ and $(x\texttt{-of-restriction-line}, P[j]_y-y)$ and insert it into \texttt{segments};
            \ENDFOR
            
            \STATE Sort \texttt{segments} in non-ascending order based on $y$-coordinates of their tops. Greedily stab them by choosing the stabbing point as low as possible, while still stabbing the topmost unstabbed segment. Put the stabbing points (the disk centers) in $C$;
            
            \STATE 
            Increment \texttt{right} by a multiple of $\sqrt{3}$ such that $P[\texttt{current}] - \texttt{right} \leq \sqrt{3}$;

        \ENDWHILE

            \IF{$|C| < $ \texttt{min}} 
                \STATE \textsc{Disk-Centers} $\leftarrow C$, \texttt{min} $\leftarrow |C|$;
            \ENDIF
   \ENDFOR
   \STATE \textbf{return} \textsc{Disk-Centers};
   \end{algorithmic}
\end{algorithm}


\subsection{BLMS-2017: Biniaz, Liu, Maheshwari, and Smid (2017)}

The algorithm BLMS-2017 by Biniaz et al.~\cite{biniaz2017approximation} gives $4$-approximation and runs in $O(n \log n)$ time. Refer to Algorithm~\ref{alg: BLMS-2017} for a high-level description of the algorithm. 

Although the algorithm has a low approximation factor, placing four disks in advance sometimes introduces empty disks in the solutions. For instance, if the distance between any two points in $P$ is greater than $2$, \textsc{BLMS-2017} places exactly $4$ times the optimal number of disks (in this case, $|\text{OPT}|=n$). In comparison, other algorithms such as G-1991 or DGT-2018 place an optimal number of disks. However, in our implementation, we have managed to eliminate such empty disks placed by this algorithm.  BLMS-2017 uses the popular sweep line technique in computational geometry by carefully maintaining a binary search tree of disk centers. For further details, we refer the reader to the original paper~\cite{biniaz2017approximation}. 

\begin{algorithm}[H]
	\caption{\textsc{: BLMS-2017($P$)}} 
	\label{alg: BLMS-2017}
	\begin{algorithmic}[1] 
		\STATE $C \leftarrow \emptyset$, $\textsc{Disk-Centers} \leftarrow \emptyset$;
		\STATE Sort $P$ from left to right in $O(n\log n)$ time;
		\FOR{$p  \in P$ }
		\IF {the nearest point in $C$ is more than $2$ units away from $p$}
		\STATE Place four disks centered at $(p_x,p_y),(p_x+\sqrt{3},p_y),(p_x + \frac{\sqrt{3}}{2},p_y+1.5),(p_x + \frac{\sqrt{3}}{2},p_y-1.5)$ as shown in Fig.~\ref{fig:f1} and 
		add these four points to \textsc{Disk-Centers};
		\STATE $C \leftarrow C \cup \{p\}$;
		\ENDIF
		\ENDFOR
		\STATE \textbf{return} \textsc{Disk-Centers};
	\end{algorithmic}
\end{algorithm}

\begin{figure}[ht]
	\centering
	\includegraphics[scale=0.5]{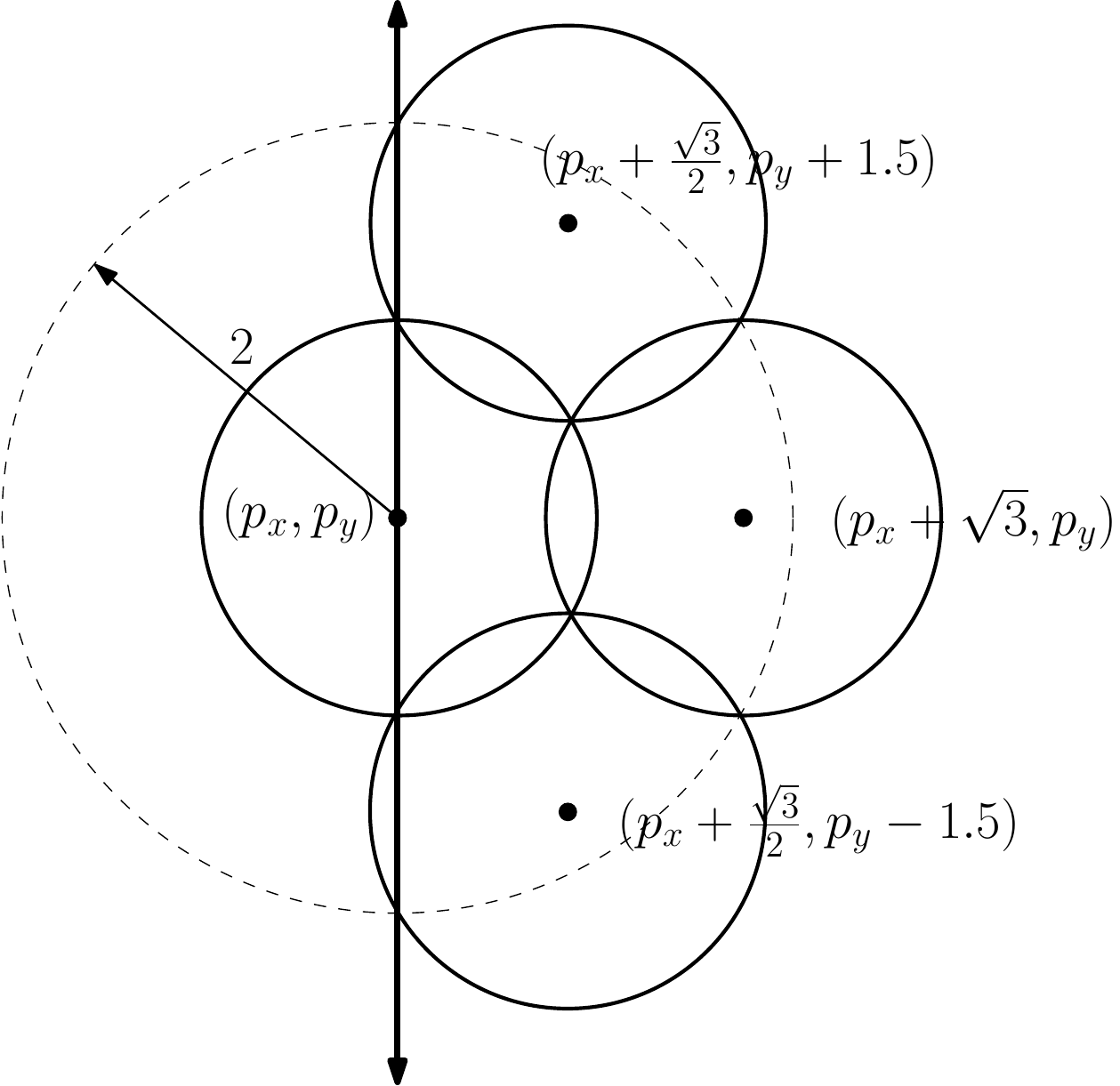}
	\caption{The four disks placed by BLMS-2017 when $p$ is processed. The figure illustrates the situation when the nearest point in $C$ is more than 2 units away from $p$. }
	\label{fig:f1}
\end{figure}

\subsection{DGT-2018: Dumitrescu, Ghosh, and T\'{o}th (2018)}

DGT-2018~\cite{dumitrescu2020online}\footnote{A preliminary version of the paper appeared in the proceedings of the 12th Annual International Conference on Combinatorial Optimization and Applications, 2018 (COCOA 2018), and as such we use the same year in the algorithm abbreviation.} is a simple online algorithm that gives $5$-approximation in the plane. In $d$-space, the algorithm has an approximation factor of $O(1.321^d)$ which is an improvement over CCFM-1997 (an online algorithm).  Refer to Algorithm~\ref{alg: DGT-2018} for a high-level description of this algorithm.

\begin{algorithm}[H]
\caption{\textsc{: DGT-2018($P$)}} 
\label{alg: DGT-2018} 
 
\begin{algorithmic}[1] 
   \STATE $\textsc{Disk-Centers} \leftarrow \emptyset$;
   \FOR{$p \in P$}
        \IF {the distance from $p$ to the nearest point in \textsc{Disk-Centers} is $>1$}
            \STATE \textsc{Disk-Centers} $\leftarrow$ \textsc{Disk-Centers} $\cup$ $\{p\}$; 
        \ENDIF
   \ENDFOR
   \STATE \textbf{return} \textsc{Disk-Centers};
   \end{algorithmic}
\end{algorithm}

\subsection{$\textsc{FastCover}$, \textsc{FastCover}\texttt{+}, and \textsc{FastCover}\texttt{++}} \label{GHS}

 In this section, we present a simple $7$-approximation algorithm \textsc{FastCover} that runs in $O(n)$ expected time. We use a $\sqrt{2}$-sized square grid $\Gamma$. A cell in $\Gamma$ is denoted by $\sigma(i,j)$ where $i,j\in \mathbf{Z}$. Formally, the cell $\sigma(i,j)$ is the intersection of the four half-planes: $x \geq \sqrt{2}i, x < \sqrt{2}(i+1), y \geq \sqrt{2}j, y < \sqrt{2}(j+1)$. For every cell $\sigma(i,j) \in \Gamma$, there exists a unit disk $D(i,j)$ that circumscribes $\sigma(i,j)$.  We say that $D(i,j)$ is the \emph{grid-disk} of $\sigma(i,j)$. Clearly, $D(i,j)$ is unique. 

 The points in the input $P$ are considered sequentially without any kind of pre-processing. Let  $\sigma(i,j)$ be the cell to which the current point $p$ belongs, for some $i,j \in \mathbf{Z}$. Place $D(i,j)$ if not placed previously. 
 
  Since every grid-disk can be represented using a pair of integers, we use a hash-table $\mathcal{H}$ of integer-pairs for storing the placed grid-disks.  The main motivation of using a hash-table is fast lookups and insertions in practice. 
 The cell in which $p$ lies is $\sigma(i,j)$ where $i=\lfloor p_x/\sqrt{2}\rfloor$ and $ j= \lfloor p_y/\sqrt{2}\rfloor$. 
 The center of the grid-disk $D(i,j)$ is located at $(\sqrt{2}i + \frac{1}{\sqrt{2}},\sqrt{2}j + \frac{1}{\sqrt{2}})$. See Algorithm~\ref{alg: fastcover} for a pseudocode. 
 
 \begin{algorithm}[H]
 	\caption{\textsc{: \textsc{FastCover}($P$)}} 
 	\label{alg: fastcover} 
 	
 	\begin{algorithmic}[1] 
 		\STATE \texttt{$\mathcal{H}$} $\leftarrow \emptyset$; 
 		  $\textsc{Disk-Centers} \leftarrow \emptyset$;
 		\FOR{$p \in P$}
 		\STATE  $i\leftarrow \lfloor p_x/\sqrt{2}\rfloor$;   $j\leftarrow \lfloor p_y/\sqrt{2}\rfloor$;
 		
 		\IF{$(i,j) \notin \mathcal{H}$}
 		\STATE insert $(i,j)$ into $\mathcal{H}$ and  $(\sqrt{2}i + \frac{1}{\sqrt{2}}, \sqrt{2}j + \frac{1}{\sqrt{2}})$ into \textsc{Disk-Centers};
 		\ENDIF
 		\ENDFOR
 		\STATE \textbf{return} \textsc{Disk-Centers};
 	\end{algorithmic}
 \end{algorithm}

In the following, we show that \textsc{FastCover} is a $7$-approximation algorithm for UDC.

\begin{theorem}\label{thm:GHS}
	{\textsc{FastCover}} is a $7$-approximation algorithm for the unit disk cover problem that runs in $O(n)$ expected time  using $O(s)$ extra space where $s$ is the size of the solution generated by it. Furthermore, for every integer $n \geq 1$, there exists an $7n$-element pointset for which \textsc{FastCover} places seven times the optimal number of disks. 
\end{theorem}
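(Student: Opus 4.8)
The plan is to prove the three assertions in turn: feasibility together with the factor $7$, the time and space bounds, and the matching lower bound. For the approximation guarantee I would first note that the output is a valid cover: whenever a point $p$ makes a cell $\sigma(i,j)$ nonempty, Algorithm~\ref{alg: fastcover} inserts the grid-disk $D(i,j)$, and since $D(i,j)$ circumscribes $\sigma(i,j)$ it contains $p$; moreover the algorithm places exactly one disk per nonempty cell, so the output size $s$ equals the number of nonempty cells of $\Gamma$. The crux is then the geometric fact that \emph{a closed unit disk meets at most $7$ cells of $\Gamma$}. Granting this, I fix an optimal cover and charge each nonempty cell $\sigma$ to an optimal disk $D^\star$ covering some witness point $p_\sigma\in P\cap\sigma$; since $p_\sigma\in D^\star\cap\sigma$ the cell meets $D^\star$, so each optimal disk receives at most $7$ charges and $s\le 7\,|\text{OPT}|$.

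The main obstacle is establishing the geometric fact with the \emph{half-open} cells used in the paper. I would translate the lattice so the disk center $(a,b)$ lies in $\sigma(0,0)$ with $a,b\in[0,\sqrt2)$; as the radius $1<\sqrt2$, only the $3\times 3$ block around $\sigma(0,0)$ can be met, so it remains to bound the four corner cells by $2$ (the center cell always counts and there are only four edge cells). For each corner cell I would write the exact condition for the closed disk to meet the half-open cell, taking care that a tangential touch counts only if the touching point actually belongs to the cell; this renders the condition for $\sigma(1,1)$ non-strict, namely $(\sqrt2-a)^2+(\sqrt2-b)^2\le 1$, and the conditions for the other three corners strict, e.g.\ $a^2+b^2<1$ for $\sigma(-1,-1)$. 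Pairing opposite corners and using $(\sqrt2-a)^2+a^2\ge 1$ and $(\sqrt2-b)^2+b^2\ge 1$ (each minimized at $\sqrt2/2$), the two membership inequalities of any opposite pair would force their left-hand sides to sum to $\ge 2$ while the inequalities require the sum to be $<2$, a contradiction; hence no opposite pair is fully met and at most $2$ corner cells are met, giving the total $1+4+2=7$. This boundary bookkeeping is exactly what keeps the bound at $7$ rather than the naive $9$ of the tangential center placement.

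For the running time, each point yields its index pair in $O(1)$ arithmetic followed by one lookup and at most one insertion into the hash table $\mathcal{H}$; with expected $O(1)$ hashing of integer pairs this is $O(n)$ expected time overall. The only structures that grow are $\mathcal{H}$ and \textsc{Disk-Centers}, each storing one entry per placed disk, so the extra space is $O(s)$.

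Finally, for tightness I would convert the extremal placement into a gadget: choosing a center such as $(0.9,0.707)$ relative to a cell, for which all four edge cells and both right-hand corner cells $\sigma(1,1),\sigma(1,-1)$ are met with positive area, I place one point in the interior of each of the $7$ met cells and inside the disk. One unit disk then covers all $7$ points, whereas \textsc{FastCover} sees $7$ distinct nonempty cells and opens $7$ grid-disks. Replicating this gadget $n$ times, translated by large multiples of $\sqrt2$ so that the gadgets occupy disjoint cell-blocks and lie more than $2$ apart (so no unit disk can serve two gadgets), produces a $7n$-point instance with $|\text{OPT}|=n$ on which \textsc{FastCover} returns $7n$ disks, realizing the factor $7$ exactly.
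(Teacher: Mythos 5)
Your proposal is correct, and while it shares the paper's overall charging frame---bounding by $7$ the number of \textsc{FastCover} disks attributable to any single optimal disk---the key geometric step is established by a genuinely different argument. The paper proves the per-disk bound by a case analysis on the number of grid points contained in an optimal disk $D$ (one, two, or four; exactly three being impossible), reasoning in each case about which of the nine surrounding grid-disks can serve points of $D\cap P$, and obtaining bounds of $4$, $7$, and $6$ respectively. You instead prove a single uniform lemma---a closed unit disk meets at most $7$ half-open cells of $\Gamma$---by restricting attention to the $3\times 3$ block of cells around the cell containing the center and eliminating opposite corner pairs with the inequality $(\sqrt{2}-a)^2+a^2\ge 1$ (and its analogue in $b$), together with careful strict versus non-strict boundary bookkeeping for the half-open cells. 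Your route buys uniformity and completeness: it requires no enumeration of lattice-point configurations, and in particular it covers the case of an optimal disk containing no grid point, which the paper's case analysis passes over silently. The paper's route buys finer structural information (which configurations force $4$, $6$, or $7$ placed disks), which is what motivates its explicit worst-case figure. On the tightness side the two constructions embody the same gadget-replication idea; your choice of translating the $7$-point gadget by large multiples of $\sqrt{2}$ is in fact the more careful one, since it keeps every copy in the same position relative to the fixed grid $\Gamma$, whereas the paper translates copies by $3$, which does not preserve grid alignment and strictly speaking requires re-verifying that each translated copy still occupies seven distinct cells.
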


\begin{proof} The union of all grid-disks in the plane gives $\mathbf{R}^2$. Hence, it is enough to consider the grid-disks for covering the points in $P$. 
	
	Given $i,j \in \mathbf{Z}$, we denote the intersection point of the two lines $x=\sqrt{2}i$ and $y=\sqrt{2}j$ by $p(i,j)$. We refer to these intersection points as \emph{grid point}s. 
	Since the distance between any two grid points is at least $\sqrt{2}$, any unit disk $D$ can contain at most four grid points in $\{p(i,j),p(i+1,j),p(i,j+1),p(i+1,j+1)\}$, for some $i,j \in \mathbf{Z}$. Now observe that $D$ cannot contain exactly three grid points since in that case, $D$ would circumscribe $\sigma(i,j)$ and consequently, $D$ would contain the four grid points $\{p(i,j),p(i+1,j),p(i,j+1),p(i+1,j+1)\}$. 
	
	Consider any disk $D$ from an optimal solution that covers $P$. It suffices to show that to cover the points $D\cap P$, \textsc{FastCover} places at most seven grid-disks. We show this using a proof by cases on the number of grid points contained by $D$. 

	\begin{figure}[ht]
		\centering
		\includegraphics[scale=0.7]{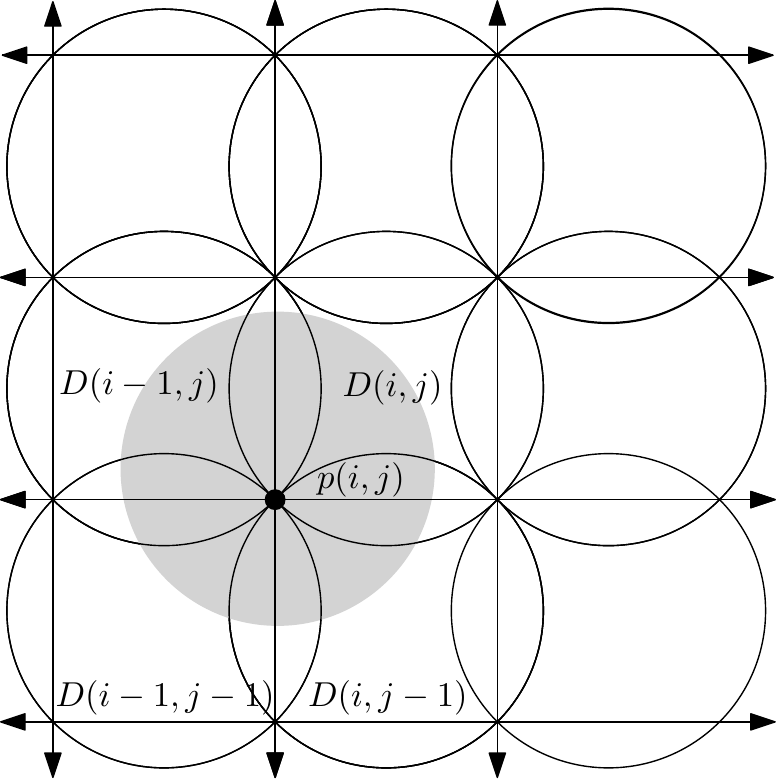} ~~~~~~~~~~~~~~~ \includegraphics[scale=0.7]{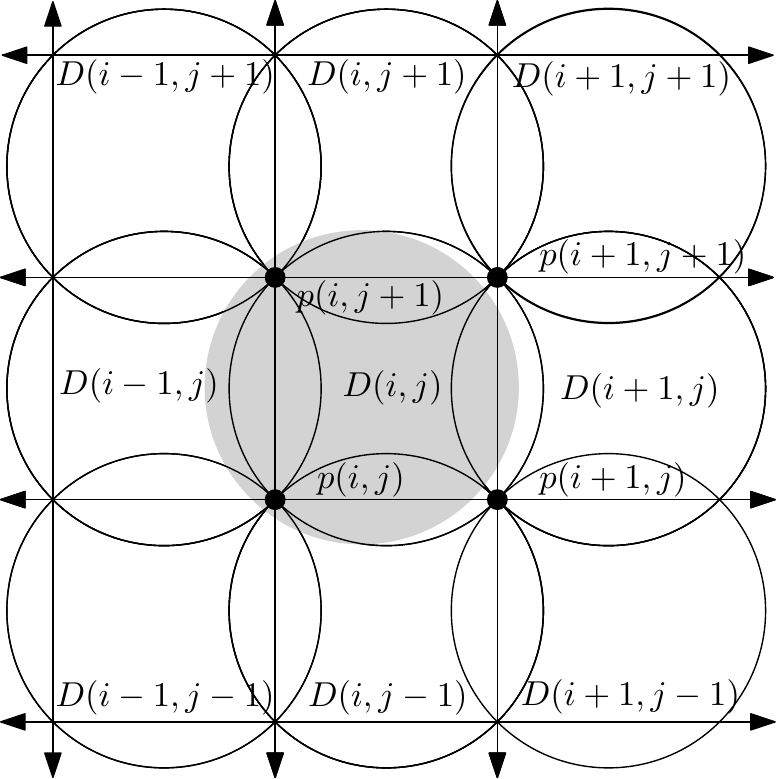} 
		\caption{$D$ is shown in gray. Left: $D$ contains exactly one grid point. Right: $D$ contains exactly two grid points.}
		\label{fig:f7}
	\end{figure}
	
	Assume that $D$ contains exactly one grid point $p(i,j)$, for some $i,j \in \mathbf{Z}$. See Fig.~\ref{fig:f7}(left). In this case, $D$ intersects the four grid-disks $D(i,j), D(i-1,j), D(i-1,j-1), D(i,j-1)$ only and as a result, \textsc{FastCover} places at most four of these grid-disks to cover the points $D \cap P$.

	
	Now assume that $D$ contains exactly two grid points. Without loss of any generality, let the two grid points be $p(i,j)$ and $p(i,j+1)$, for some $i,j \in \mathbf{Z}$. Furthermore, we safely assume that the center of $D$ is in $\sigma(i,j)$. See Fig.~\ref{fig:f7}(right). The case where its center is in $\sigma(i-1,j)$ is analogous and thus omitted. Let $\mathcal{D}:=\{D(s,t): (s,t) \in  \{i-1,i,i+1\} \times \{j-1,j,j+1\}\}$. Since the two grid points $p(i+1,j)$ and $p(i+1,j+1)$ are not in $D$, it follows that $D \cap D(i+1,j+1) = \emptyset$ and $D \cap D(i+1,j-1) = \emptyset$. Leaving aside these two disks, $D$ can intersect with at most seven grid-disks in $\mathcal{D} \setminus \{D(i+1,j+1),D(i+1,j-1)\}$. As a result, in this case, \textsc{FastCover} will place at most seven disks to cover the points $D \cap P$. 
	
	In the final case, we assume that $D$ contains exactly four grid points $p(i,j),p(i+1,j),p(i+1,j+1),p(i,j+1)$, for some $i,j \in \mathbf{Z}$. See Fig.~\ref{fig:f11}(left). Clearly, in this case, $D$ itself is a grid-disk and  $D=D(i,j)$. It is enough to consider the nine disks $\mathcal{D}:=\{D(s,t): (s,t) \in  \{i-1,i,i+1\} \times \{j-1,j,j+1\}\}$ in this case. Among these nine disks, $D(i-1,j-1)$ will not be used just to cover a point in $D\cap P$ since in that case the only possible point $p(i,j) \in D\cap P$ also belongs to $\sigma(i,j)$. Clearly, in this case the disk $D(i,j)$ will be placed to cover the point. Similarly, the disks $D(i+1,j-1)$ and $D(i-1,j+1)$ will not be placed just to cover $p(i+1,j-1)$ and $p(i,j+1)$, respectively. This rules out three disks in $\mathcal{D}$. Thus, in this case, \textsc{FastCover} will place at most six disks to cover the points in $D \cap P$.

	Since \textsc{FastCover} will at most $7$ disks to cover the points $D \cap P$, we conclude that it gives $7$-approximation. 
	
	

	For every point in $P$, we perform exactly one look-up  and at most one insertion in $\mathcal{H}$, each taking $O(1)$ expected  time; refer to Algorithm~\ref{alg: GHS}. Hence, \textsc{FastCover} runs in $O(n)$ expected time.	Note that we insert a disk into $\mathcal{H}$ only when it belongs to the solution generated by the algorithm. This implies \textsc{FastCover} needs additional $O(s)$ space, where $s$ denotes the number of disks placed by the algorithm. 
	
	\begin{figure}[H]
		\centering
		\includegraphics[scale=0.7]{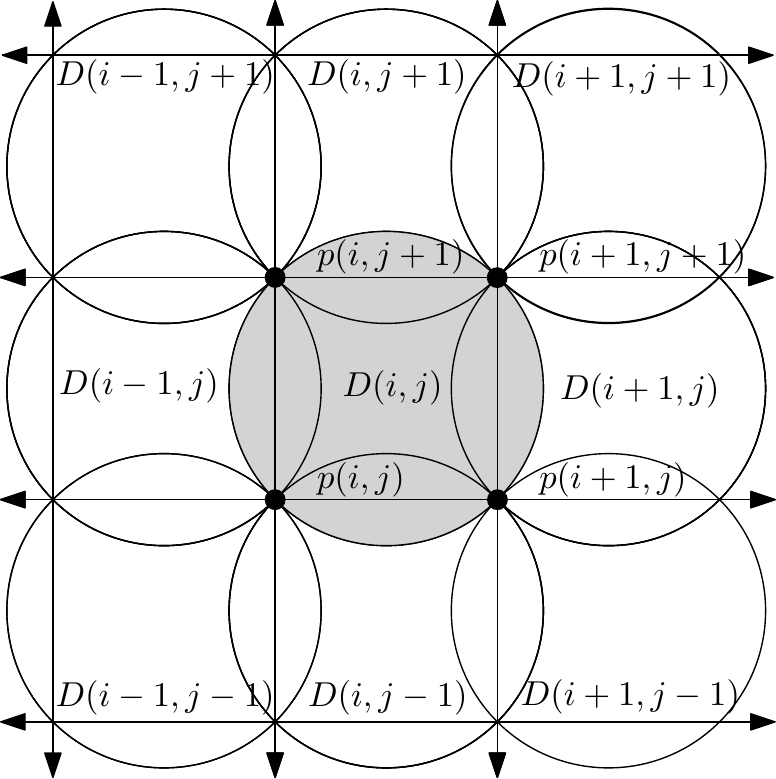} ~~~~~~~~~~~~~~~		\includegraphics[scale=0.7]{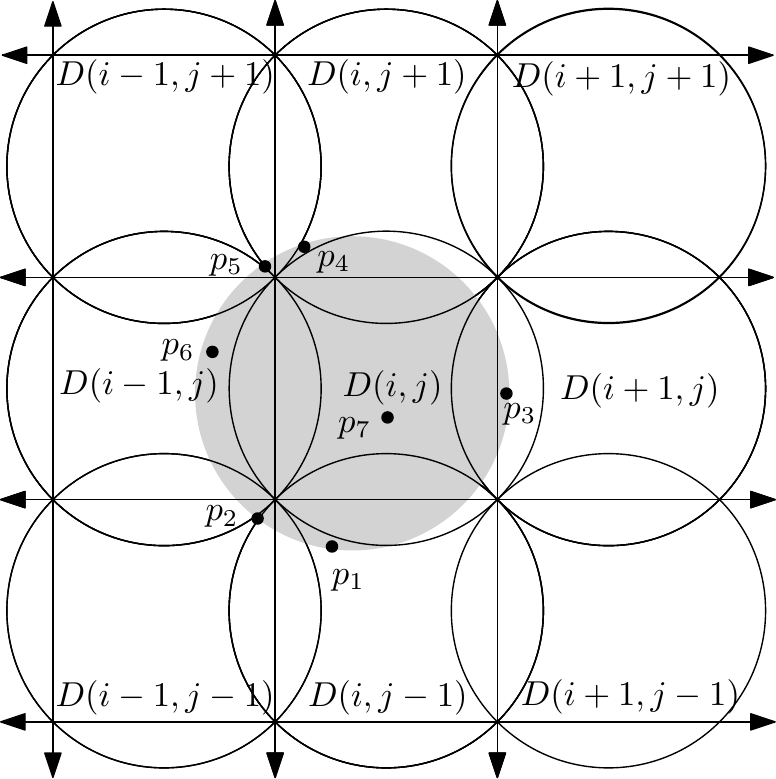}
		\caption{Left: $D$ contains exactly four grid points. Right: A sequence of seven points $p_1,\ldots,p_7$ for which \textsc{FastCover} places exactly seven disks but they can covered optimally using a single unit disk.}
		\label{fig:f11}
	\end{figure}

	Next, we present a sequence of points $Q=\{p_1,\ldots,p_7\}$ for which \textsc{FastCover} places exactly seven disks but an optimal algorithm will place exactly one disk to cover them. 
	Refer to Fig.~\ref{fig:f11}(right). Our algorithm places the seven disks: $D(i,j-1), D(i-1,j-1), D(i+1,j), D(i,j+1), D(i-1,j+1), D(i-1,j), D(i,j)$. 
	
	Given an integer $n \geq 1$, consider $n$ copies of $Q$ and place them sufficiently apart. Let these $n$ copies be $Q_1,\ldots,Q_n$, where $Q_1=Q$.  For $2 \leq i \leq n$, $Q_i$ is obtained from the $Q_{i-1}$ by adding $3$ to the $x$-coordinate of every point in $Q_{i-1}$. In this case, the $7n$ points in $\cup_{i=1}^n Q_i$ can be covered optimally using $n$ disks, but \textsc{Fast-Cover} will place exactly $7n$ disks to cover the $n$ points.
\end{proof}

\paragraph{Remark.} If the bounding-box of $P$ is known in advance and sufficient space is available, \textsc{FastCover} can be implemented to run in $O(n)$ worst-case time using a matrix for storing the disk centers. In our experiments, we have assumed that the bounding box is unknown.

 Next, we present two heuristics which are capable of improving the solutions computed by \textsc{FastCover} by decreasing the number of disks placed while still covering $P$.  
 


\paragraph{Heuristic 1.} We observe that if $p \in \sigma(i,j)$, then $p$ may also be  covered by one of the four adjacent grid-disks $N:=D(i,j+1),S:=D(i,j-1),E:=D(i+1,j),W:=D(i-1,j)$; refer to Fig.~\ref{fig:f4}. So, if $D(i,j)$ is already placed before, we do not take any action. Otherwise, we check if $p$ is covered by any one of the above four neighboring disks placed before. If not, we place $D(i,j)$. We find  this simple heuristic to be effective in reducing cover sizes in practice.


Now consider the axis-parallel square $\alpha(i,j)$ that lies inside $\sigma(i,j)$ and touches the four grid-disks $N,S,E,W$. If $p$ lies in the interior of $\alpha(i,j)$, one can safely conclude that $p$ is not covered by any of the four disks $N,S,E,W$. Note that this simple checking does not require any distance calculation. 

Let $d$ be the distance between $\alpha(i,j)$ and the boundary of $\sigma(i,j)$. Observe that $d= |BC|=|AB| = |AO| - |BO| =  1-(\sqrt{2}/2).$ 

Before verifying whether $p \in E$ using a distance calculation, we first check if $p_x \geq \sqrt{2}(i+1.5)-1$ since the right boundary of $\alpha(i,j)$ has the $x$-coordinate $\sqrt{2}(i+1) - d = \sqrt{2}(i+1) - (1-\sqrt{2}/2)= \sqrt{2}(i+1.5)-1$. 
If not, we can safely conclude that $p \notin E$. 
Similarly, before checking whether $p\in N$,  we first verify if $p_y \geq \sqrt{2}(j+1.5)-1$. For the disks $W$ and $S$, we use the conditions $p_x \leq \sqrt{2}(i-0.5)+1$ and $p_y \leq \sqrt{2}(j-0.5)+1$, respectively. 
These comparisons along with at most four checks to verify if $p$ is covered by one of the four neighboring disks help in reducing the number of disks placed in practice without slowing down the algorithm by much.

\begin{figure}[H]
	\centering
	\includegraphics[scale=0.7]{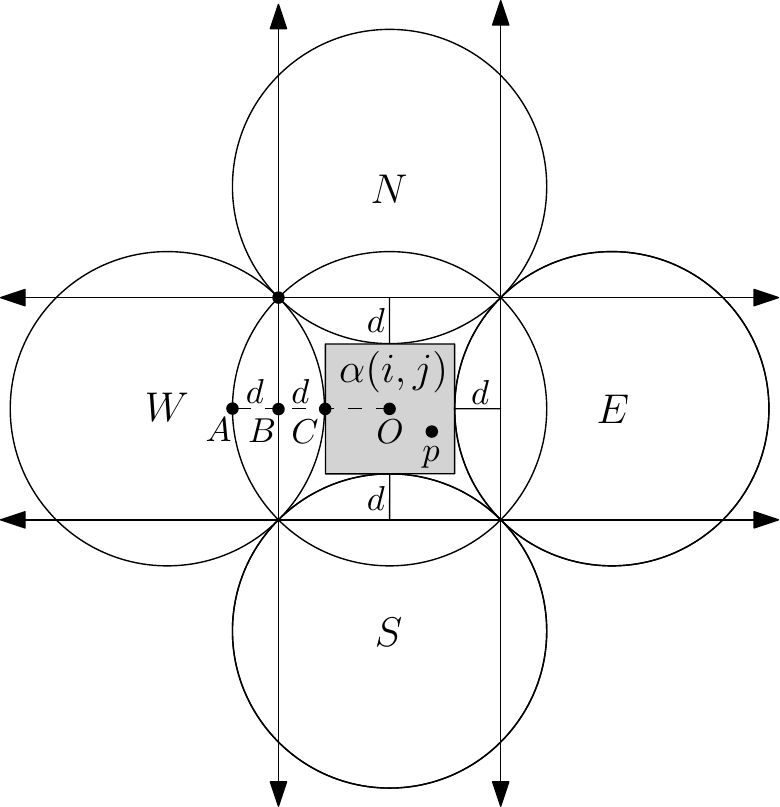}
	\caption{If $p \in \alpha(i,j)$ (shown in gray), then $p$ is not covered by any disk in $\{N,S,E,W\}$; $d = 1 - \frac{\sqrt{2}}{2} $.}
	\label{fig:f4}
\end{figure}


We propose an improved version of \textsc{FastCover} named \textsc{FastCover}\texttt{+} that includes the Heuristic 1. Refer to Algorithm~\ref{alg: fastcover+} for a pseudocode of \textsc{FastCover}\texttt{+}. 

\paragraph{Heuristic 2.} This  heuristic tries to lower down the number of disks placed by \textsc{FastCover} or \textsc{FastCover}\texttt{+} using a coalescing technique. If the points covered by two adjacent grid-disks $D(i,j),D(k,\ell)$ where $|i-k|\leq 1, |j-\ell|\leq 1$, can be covered by a single disk $D$, then we eliminate $D(i,j),D(k,\ell)$ from the solution and include $D$ instead. The pairs of grid-disks are chosen arbitrarily for coalescing. Now, the question remains to be answered is, how to confirm the existence of such a disk $D$. Certainly, one way is to find the minimum enclosing disk $D$ of the points covered by the disks $D(i,j),D(k,\ell)$ and check if its radius is at most a unit. But, every algorithm that finds a minimum enclosing disk of an $n$-element pointset  runs in $\Omega(n)$ time. 

We propose a  constant-time approximate method using bounding-boxes. For every grid-disk $D(i,j)$ placed, we maintain a bounding-box $B(i,j)$ of the points covered by $D(i,j)$. 
Now, for every point $p \in P$,  a grid-disk $D(i,j)$ is always found that covers $p$. 
If $p \in B(i,j)$, then $B(i,j)$ remains unaltered. Otherwise, $B(i,j)$ is updated to include $p$ in it. This update can be done in $O(1)$ time.  

The respective bounding-boxes $B(i,j), B(k,\ell)$ of the disks $D(i,j),D(k,\ell)$ can be used to determine in $O(1)$ time if they can be coalesced into one unit disk $D$. We compute the bounding-box $B$ by taking the union of $B(i,j)$ and  $B(k,\ell)$.  This union can be easily computed in $O(1)$ time by considering the maximum and minimum $x,y$-coordinates of the two bounding-boxes.

If the diagonal of $B$ has length at most $2$, we report that there is a disk $D$ that covers the points in $P \cap (D(i,j) \cup D(k,\ell))$. We use the center of $B$ as the center of $D$. Refer to Fig.~\ref{fig:f12} for an illustration.

\begin{algorithm}[H]
	\caption{\textsc{: \textsc{FastCover}\texttt{+}($P$)}} 
	\label{alg: fastcover+} 
	
	\begin{algorithmic}[1] 
		\STATE \texttt{$\mathcal{H}$} $\leftarrow \emptyset$; 
		 $\textsc{Disk-Centers} \leftarrow \emptyset$;
		\FOR{$p \in P$}
		\STATE  $i\leftarrow \lfloor p_x/\sqrt{2}\rfloor$;  $j\leftarrow \lfloor p_y/\sqrt{2}\rfloor$;
		
		\IF{$(i,j) \in \mathcal{H}$}
		\STATE update $B(i,j)$ using $p$; \COMMENT{$p$ is already covered by $D(i,j)$}
		
			\ELSIF{$p_x \geq \sqrt{2}(i+1.5) -1$ \textbf{and} $({i}+1,{j}) \in\mathcal{H}$ \textbf{and} \\\quad $\texttt{distance}(p, (\sqrt{2}({i}+1)+\frac{1}{\sqrt{2}},\sqrt{2}{j}+\frac{1}{\sqrt{2}})) \leq 1$}
			\STATE \textbf{continue}; \COMMENT{$p$ is covered by the grid-disk $E$ placed before} 

			\ELSIF{$p_x \leq \sqrt{2}(i-0.5)+1$ \textbf{and} $({i}-1,{j}) \in
				\mathcal{H}$ \textbf{and} \\\quad$\texttt{distance}(p, (\sqrt{2}({i}-1)+\frac{1}{\sqrt{2}},\sqrt{2}{j}+\frac{1}{\sqrt{2}})) \leq 1$}
			\STATE  \textbf{continue}; \COMMENT{$p$ is covered by the grid-disk $W$ placed before}

			\ELSIF{$p_y \geq \sqrt{2}(j+1.5) -1$ \textbf{and} $({i},{j+1}) \in
				\mathcal{H}$ \textbf{and}\\\quad $\texttt{distance}(p, (\sqrt{2}{i}+\frac{1}{\sqrt{2}},\sqrt{2}{(j+1)}+\frac{1}{\sqrt{2}})) \leq 1$}
			\STATE   \textbf{continue}; \COMMENT{$p$ is covered by the grid-disk $N$ placed before}
		
			\ELSIF{$p_y \leq \sqrt{2}(j-0.5)+1$ \textbf{and} $({i},{j-1}) \in
				\mathcal{H}$ \textbf{and} \\\quad$\texttt{distance}(p, (\sqrt{2}{i}+\frac{1}{\sqrt{2}},\sqrt{2}{(j-1)}+\frac{1}{\sqrt{2}})) \leq 1$}
			\STATE  \textbf{continue}; \COMMENT{$p$ is covered by the grid-disk $S$ placed before}
		
		\ELSE 
		\STATE insert $(i,j)$ into $\mathcal{H}$ and  $(\sqrt{2}i + \frac{1}{\sqrt{2}}, \sqrt{2}j + \frac{1}{\sqrt{2}})$ into \textsc{Disk-Centers};
		\ENDIF
		\ENDFOR
		\STATE \textbf{return} \textsc{Disk-Centers};
	\end{algorithmic}
\end{algorithm}

\begin{figure}[ht]
	\centering
	\includegraphics[scale=0.7]{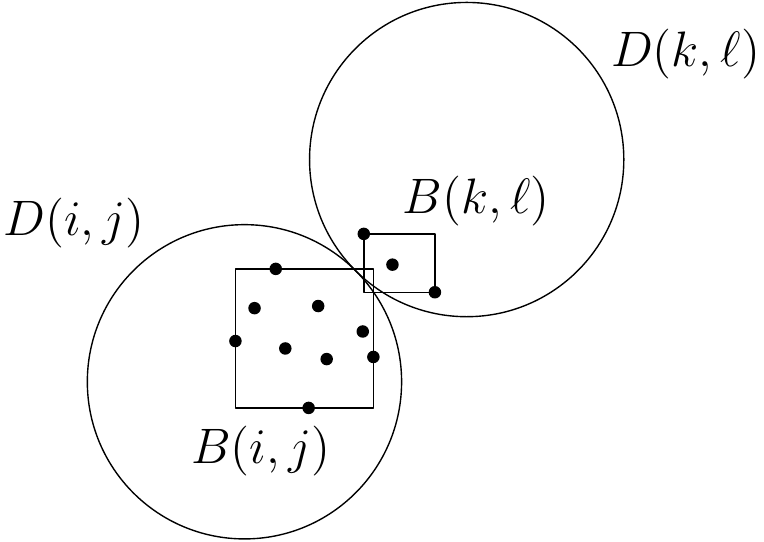} ~~~~~~~~~	\includegraphics[scale=0.7]{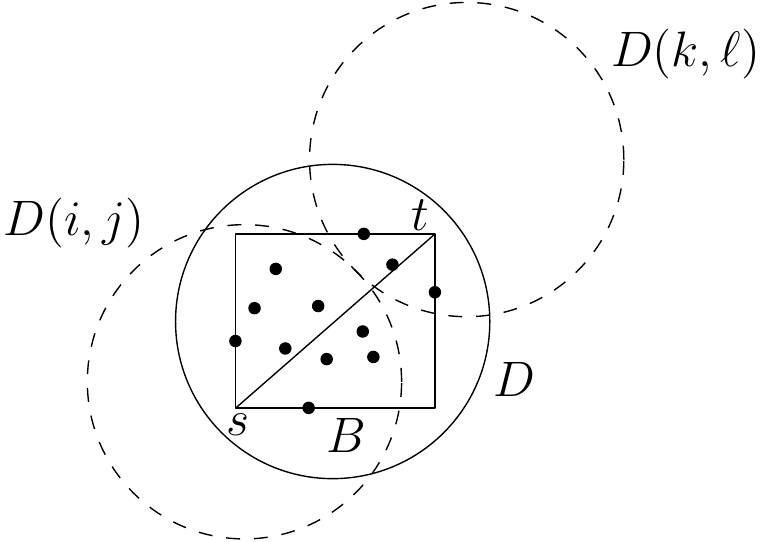} 
	\caption{Left: The case is shown where $k=i+1, \ell=j+1$. Right: The bounding-boxes $B(i,j),B(k,\ell)$ are merged to form $B$. In this case, the diagonal-length of $B$ is $|st|\leq 2$ and hence the unit disk $D$, centered at the midpoint of $st$, is used to cover the points in $B$. As a result, the disks $D(i,j),D(k,\ell)$ are eliminated from the final solution and $D$ is included instead.}
	\label{fig:f12}
\end{figure}

In this heuristic, every grid-disk is considered at most eight times. Since searching and deletion in $\mathcal{H}$ takes $O(1)$ expected time, this heuristic can be implemented to run in $O(n)$ expected time.

Now we present another improved version of \textsc{FastCover} named \textsc{FastCover}\texttt{++} that includes both Heuristics 1 and 2. See Algorithm~\ref{alg: GHS} for a pseudocode of \textsc{FastCover}\texttt{++}.

\begin{algorithm}[H]
	\caption{\textsc{: \textsc{FastCover}\texttt{++}($P$)}} 
	\label{alg: GHS} 
	
	\begin{algorithmic}[1] 
		\STATE \texttt{$\mathcal{H}$} $\leftarrow \emptyset$; 
						 $\textsc{Disk-Centers} \leftarrow \emptyset$;

		\FOR{$p \in P$}
		\STATE  $i\leftarrow \lfloor p_x/\sqrt{2}\rfloor$;  $j\leftarrow \lfloor p_y/\sqrt{2}\rfloor$;
		
		\IF{$(i,j) \in \mathcal{H}$}
		\STATE update $B(i,j)$ using $p$; \COMMENT{$p$ is already covered by $D(i,j)$}
		
		\ELSIF{$p_x \geq \sqrt{2}(i+1.5) -1$ \textbf{and} $({i}+1,{j}) \in\mathcal{H}$ \textbf{and} \\\quad $\texttt{distance}(p, (\sqrt{2}({i}+1)+\frac{1}{\sqrt{2}},\sqrt{2}{j}+\frac{1}{\sqrt{2}})) \leq 1$}
		\STATE update $B(i+1,j)$ using $p$; \COMMENT{$p$ is covered by the grid-disk $E$ placed before} 

		\ELSIF{$p_x \leq \sqrt{2}(i-0.5)+1$ \textbf{and} $({i}-1,{j}) \in
			\mathcal{H}$ \textbf{and} \\\quad$\texttt{distance}(p, (\sqrt{2}({i}-1)+\frac{1}{\sqrt{2}},\sqrt{2}{j}+\frac{1}{\sqrt{2}})) \leq 1$}
		\STATE  update $B(i-1,j)$ using $p$; \COMMENT{$p$ is covered by the grid-disk $W$ placed before}

		\ELSIF{$p_y \geq \sqrt{2}(j+1.5) -1$ \textbf{and} $({i},{j+1}) \in
			\mathcal{H}$ \textbf{and}\\\quad $\texttt{distance}(p, (\sqrt{2}{i}+\frac{1}{\sqrt{2}},\sqrt{2}{(j+1)}+\frac{1}{\sqrt{2}})) \leq 1$}
		\STATE   update $B(i,j+1)$ using $p$; \COMMENT{$p$ is covered by the grid-disk $N$ placed before}
		
		\ELSIF{$p_y \leq \sqrt{2}(j-0.5)+1$ \textbf{and} $({i},{j-1}) \in
			\mathcal{H}$ \textbf{and} \\\quad$\texttt{distance}(p, (\sqrt{2}{i}+\frac{1}{\sqrt{2}},\sqrt{2}{(j-1)}+\frac{1}{\sqrt{2}})) \leq 1$}
		\STATE   update $B(i,j-1)$ using $p$; \COMMENT{$p$ is covered by the grid-disk $S$ placed before}
		
		\ELSE 
		\STATE insert $(i,j)$ into $\mathcal{H}$ and initialize $B(i,j)$ using $p$;
		\ENDIF
		\ENDFOR

		\WHILE{there is a grid-disk $(i,j) \in \mathcal{H}$ that is not considered yet}
		\IF{there is a grid disk $(k,\ell) \in \mathcal{H}$ such that $|i-k|\leq 1, |j-\ell|\leq 1$ and the diagonal-length of the bounding-box $B:=B(i,j)\cup B(k,\ell)$ is at most $2$}
		\STATE remove $(i,j)$ and $(k,\ell)$ from $\mathcal{H}$ and 
	 add the center of $B$ to \textsc{Disk-Centers};
		\ENDIF
		\ENDWHILE
		
		\FOR{every grid-disk $(i,j) \in \mathcal{H}$}
		\STATE insert $(\sqrt{2}i + \frac{1}{\sqrt{2}}, \sqrt{2}j + \frac{1}{\sqrt{2}})$ into \textsc{Disk-Centers};
		\ENDFOR
		
		\STATE \textbf{return} \textsc{Disk-Centers};
	\end{algorithmic}
\end{algorithm}


\section{Engineering and experiments}\label{sec:exp}

In this section, we present our experimental results. 
The algorithms have been implemented in GNU C\texttt{++}$17$ using the CGAL library~\cite{cgal:eb-21b}. The machine used for our experiments is equipped with a {Ryzen 5 1600 (3.2 GHz) processor, $24$ GB of main memory, and runs Ubuntu Linux 20.04 LTS}.  During compilation, the \texttt{g++} compiler was invoked with \texttt{-O3} optimization flag for fastest real-world execution times. 
\paragraph{Implementation details.} For better real-world performance, containers and functions from the C\texttt{++} STL (Standard Template Library) are used wherever needed. The \texttt{CGAL::Cartesian<double>} from CGAL is used for geometric computations.  We have tried our best to tune our codes to run fast. For instance, constant expressions used in the code have been pre-calculated and stored in variables to avoid repeated calculations. For measuring time, we have used \texttt{std::chrono::high\_resolution\_clock}. Wherever needed, for computing distance between two points we have used the \texttt{CGAL::squared\_distance} function.

\begin{itemize}\itemsep0pt
	\item G-1991: Instead of partitioning the horizontal strips into two groups as stated in the algorithm,  we have used one group and then processed the  strips sequentially for faster speed. Since ordering is needed, \texttt{std::map} is used for storing the strips along with the points.
	
	\item CCFM-1997: As evident from its pseudocode, this algorithm needs nearest neighbor searching for every point. Although CGAL has support for such queries, we have used a grid-based approach akin to \textsc{FastCover} for speed. We maintain the non-empty grid cells using a \texttt{std::unordered\_map} with \texttt{boost::hash<std::pair<int,int>>} as the hash function for fast real-world speed. For every cell, we maintain a list of disk centers inside it. In this way the nearest neighbor query for the current point under consideration can be executed fast since inside the cell in which the point lies, there can be a constant number of disk centers placed by CCFM-1997. The same observation holds for the neighboring cells. For every point, we need to probe into at most nine cells and consequently, nearest neighbor queries implemented in this fashion run very fast for this algorithm. 

	\item LL-2014 and LL-2014-1P: We have used \texttt{std::sort} for sorting. Recall that LL-2014 uses six passes for computing covers. In our experiments, we find that there is barely any difference in cover sizes when one pass is used as opposed to six passes.  Obviously, the one-pass version is much faster too. We named this one pass version LL-2014-1P. 
	
	\item BLMS-2017: As stated in Section~\ref{sec:algorithms}, this algorithm may place empty disks sometimes. In order to eliminate such disks from the solution, we keep track of the empty disks. As proposed by the authors of BLMS-2017, we have used a binary search tree to implement this algorithm. This tree is implemented using \texttt{std::set}.
	
	\item DGT-2018: Similar to CCFM-1997, this algorithm is also dependent on nearest neighbor queries. As such we have used the grid-based setup as used in CCFM-1997. 
	
	\item \textsc{FastCover} and \textsc{FastCover}\texttt{+}: We have used \texttt{std::unordered\_set} to implement the required hash-table for this algorithm with \texttt{boost::hash<std::pair<int,int>>} as the hash function in order to maintain the set of placed grid-disks.

	\item \textsc{FastCover}\texttt{++}: In this variation of \textsc{FastCover} we have used \texttt{std::unordered\_map} with \texttt{boost::hash<std::pair<int,int>>} as the hash function to maintain the set of placed grid-disks and their corresponding bounding-boxes. 
\end{itemize}

In our experiments, we have used both synthetic and real-world pointsets. The execution times of most UDC algorithms vastly depend on the density of pointsets. Clearly, covers of high-density pointsets always have small sizes. Interestingly, this enhances the speed of most UDC algorithms since they consult the set of disks already placed to verify if the current point under consideration is already covered by at least one of those disks. This motivates us to use pointsets having varied densities.

\paragraph{Synthetic pointsets.} For generating synthetic pointsets, we have used the following three random pointset generators: \texttt{CGAL::Random\_points\_in\_square\_2} (generates pointsets inside a square), \texttt{CGAL::Random\_points\_in\_disc\_2} (generates pointsets inside a disk), and \texttt{CGAL::random\_convex\_set\_2} (generates convex pointsets inside a square). Besides, we also have used pointsets drawn from an annulus. For generating points inside annulus, we have used \texttt{std::default\_random\_engine} and \texttt{std::uniform\_real\_distribution<double>} to generate point coordinates.   In out experiments, the sizes of the pointsets are in the order of millions. They range from $1$ to $10$ million. For every value of $n$, we have drawn $5$ samples for our experiments. The reported times and cover sizes are averaged over five samples. The plots and tables for synthetic pointsets have been consolidated and moved to Section~\ref{sec:plots} for an easy reference. 

\begin{itemize}\itemsep0pt
\item \textbf{Points drawn from a square.} For this class of pointsets, we have used four bounding boxes having areas $10^7, 10^6, 10^5, 10^4$ to have pointsets of varying densities. Refer to the Figs.~\ref{fig:PIS-10-7},~\ref{fig:PIS-10-6},~\ref{fig:PIS-10-5},and~\ref{fig:PIS-10-4}. In Fig.~\ref{fig:PIS-10-7}, the point densities are in the range $0.1,0.2,\ldots,1$; in Fig.~\ref{fig:PIS-10-6}, they are in the range $1,2,\ldots,10$; in Fig.~\ref{fig:PIS-10-5}, the range is  $10,20,\ldots,100$, and  in Fig.~\ref{fig:PIS-10-4}, the range is  $100,200,\ldots,1000$. Thus, our pointsets have densities that vary from as low as $0.1$ to as high as $1000$. 

As evident from Figs.~\ref{fig:PIS-10-7},~\ref{fig:PIS-10-6},~\ref{fig:PIS-10-5},and~\ref{fig:PIS-10-4}, LL-2014 turned out to be the slowest of all. The main reason for this slowdown is the number of passes it makes. It makes $6$ passes to bring down the approximation factor to $25/6$ from $5$. However, we find that $6$ passes are not that helpful in reducing the number of disks placed. Indeed,  as can be seen in our experimental data, there is a negligible difference between LL-2014 and LL-2014-1P (the one-pass version of LL-2014) when it comes to cover size.  On the other hand, in terms of speed, LL-2014-1P is much faster than LL-2014 since it is making just one pass. LL-2014-1P did very well in generating low-sized covers. In fact, its performance is one of the best overall in this regard. In Fig.~\ref{fig:PIS-10-6}, we find it is way ahead of others in placing fewer disks. In terms of speed, LL-2014-1P remained very competitive everywhere.

BLMS-2017 and CCFM-1997 are some of the slowest in this case.  They are also lagging behind others in terms of the disks they place. 
Interestingly, G-1991 performed really well both in terms of speed and cover size, especially for high-density pointsets; see Figs.~\ref{fig:PIS-10-5} and~\ref{fig:PIS-10-4}. This is  surprising because it has an approximation factor of $8$, the highest among the algorithms we have engineered. In Fig.~\ref{fig:PIS-10-4}, we find that it placed the least number of disks while being very fast. We find DGT-2018 to be less efficient than most other algorithms regarding the number of disks placed. 

\textsc{FastCover} turned out to be the fastest of all because of its sheer simplicity and no use of any pre-processing such as sorting.  \textsc{FastCover}\texttt{+} (Heuristic 1 included) really helped to bring down the number of disks placed without any considerable slowdown. \textsc{FastCover}\texttt{++} with Heuristic 2 reduced the number of disks further without slowing down too much.
Refer to Fig.~\ref{fig:PIS-10-7} for instance. In most of the cases, \textsc{FastCover}\texttt{+} and \textsc{FastCover}\texttt{++} could decrease the number of disks placed considerably. When $n=10M$, \textsc{FastCover} has placed $4,323,694$ disks on average, \textsc{FastCover}\texttt{+} has placed $3,752,103$  disks ($ \approx 13\%$ reduction in cover size), and \textsc{FastCover}\texttt{++} has placed $2,794,374$ disks ($\approx 35\%$ reduction in cover size compared to \textsc{FastCover}).
The effect of these heuristics seems to vanish with the increase in point density since for high-density pointsets almost every disk placed by \textsc{FastCover} is required. In other words, for every disk that is placed, it is highly likely that there is at least one point that is covered by that disk only. Our experimental data says that the coalescing strategy used in Heuristic 2 is not slowing down the algorithm heavily. But this slowdown is much less noticeable for higher density pointsets since, with the increase in density, the number of disks maintained by the algorithm decreases. As a result, there is a lesser number of disks to be considered for coalescing. This makes \textsc{FastCover}\texttt{++} very speedy for high-density points, see Figs.~\ref{fig:PIS-10-6},~\ref{fig:PIS-10-5},~\ref{fig:PIS-10-4}.

\item \textbf{Points drawn from a disk.} Just like our previous setup (points drawn from a square), we have used disks having areas $10^7, 10^6, 10^5, 10^4$. Refer to the Figs.~\ref{fig:PID-10-7},~\ref{fig:PID-10-6},~\ref{fig:PID-10-5},and~\ref{fig:PID-10-4}. The results obtained are quite similar to the previous setup, as can be observed in the figures and therefore we omit the discussion. 

\item \textbf{Convex pointsets drawn from a square.} We have drawn random convex sets from squares whose areas are in $\{10^7, 10^6, 10^5, 10^4\}$. Interestingly, \textsc{FastCover}\texttt{++} always placed the minimum number of disks. In terms of speed, \textsc{FastCover} turned out to be the fastest but \textsc{FastCover}\texttt{++} was very close to it everywhere. To see this, refer to the Figs.~\ref{fig:Conv-10-7},~\ref{fig:Conv-10-6},~\ref{fig:Conv-10-5},and~\ref{fig:Conv-10-4}.  To our surprise, we find that unlike the previous two classes of pointsets, LL-2014-1P did not perform well in this case both in terms of speed and the number of disks placed. BLMS-2017 and CCFM-1997 turned out to be inferior both in terms of speed and cover size. G-1991 and DGT-2018 were very speedy in this case but lagged behind the \textsc{FastCover}\texttt{++} in minimizing the number of disks placed. Our heuristics 1 and 2 together could substantially decrease cover sizes for this class of pointsets too. Refer to Fig.~\ref{fig:Conv-10-7} for instance. For $n=1M$, \textsc{FastCover} has placed $8941$ disks on average. In contrast, \textsc{FastCover}\texttt{++} has placed $5964$ disks on average. This is a $\approx 33\%$ reduction in the number of disks placed. Note that the average runtime of \textsc{FastCover} in this case is $0.15$ second and that of \textsc{FastCover}\texttt{++} is $0.44$ second. The heuristic 1 alone could not achieve much for this class  of pointsets. This is evident from the number of disks placed by \textsc{FastCover} vs \textsc{FastCover}\texttt{+}.

\item \textbf{Points drawn from an annulus.} In this setup, we have fixed the radius $r_2$ of the outer circle to $10^3$ and have varied the radius $r_1$ of the inner circle. Refer to the Figs.~\ref{fig:Ann-95},~\ref{fig:Ann-75},and~\ref{fig:Ann-50}. In our experiments, $r_1 \in  \{0.95\cdot 10^3, 0.75\cdot 10^3, 0.5\cdot 10^3\}$. For this class of pointsets, LL-2014 and LL-2014-1P came out as the clear winner considering the cover sizes. G-1991 captured  the second place in this regard and was competitive in  speed. But in terms of speed, \textsc{FastCover}, \textsc{FastCover}\texttt{+}, and \textsc{FastCover}\texttt{++} turned out to be the fastest of all and remained competitive in solution quality. The Heuristics 1 and 2 did a decent job of reducing the cover sizes but they were not effective as in the cases of other three classes of pointsets. For instance, refer to Fig.~\ref{fig:Ann-95}, $n=1M$, \textsc{FastCover} has placed $155,984.80$ disks on average. In contrast, \textsc{FastCover}\texttt{+} has placed $150,294.80$ disks ($4\%$ less number of disks), and \textsc{FastCover}\texttt{++} has placed $141,845.80$ disks ($9\%$ less number of disks compared to \textsc{FastCover}) on average.
BLMS-2017, CCFM-1997, and DGT-2018 were not only slower but also turned out to be less efficient in minimizing the cover sizes. 

\end{itemize}

\paragraph{Real-world pointsets.} We have used the following eleven real-world pointsets for our experiments. \change{The main reason behind the use of such pointsets is that they do not follow the popular synthetic distributions. Hence,  experimenting with them is beneficial to see how the algorithms perform on them both in speed and the number of disks placed. Further,  the UDC algorithms can be used on real-world pointsets to tackle practical coverage problems such as facility location and tower placements in wireless networking. Hence, experiments with real-world pointsets throw light on the real-world efficacy of these algorithms.}
See Fig.~\ref{fig:realWorld} for the experimental results obtained for these pointsets. Note that these data sets have varied bounding box sizes.

\begin{itemize}\itemsep0pt
    \item \texttt{birch3}~\cite{bus2018practical}: An $100,000$-element pointset representing random sized clusters at random locations. Area of the bounding box: $8.84532 \times 10^{11}$. Point density: $1.13\times 10^{-7}$
    
   \item \texttt{monalisa}~\cite{tsp}: A $100,000$-city TSP instance representing a continuous-line drawing of the Mona Lisa. Area of the bounding box: $3.98681 \times 10^{8}$. Point density: $ 0.00025$

    \item \texttt{usa}~\cite{tsp}: A $115,475$-city TSP instance representing (nearly) all towns, villages, and cities in the United States. Area of the bounding box: $1.43141 \times 10^{9}$. Point density: $ 0.00008$
    
    \item \texttt{KDDCU2D}~\cite{bus2018practical}: An $145,751$-element pointset representing the first two dimensions of a protein data-set. Area of the bounding box: $8564.16$. Point density: $ 17.019$

    \item \texttt{europe}~\cite{bus2018practical}: An $169,308$-element pointset representing differential coordinates of the map of Europe. Area of the bounding box: $10^{12}$. Point density: $1.69308 \times 10^{-7}$
    
    \item \texttt{wildfires}\footnote{\url{https://www.kaggle.com/rtatman/188-million-us-wildfires/home}}: An $1,880,465$-element pointset representing wildfire locations in USA. Area of the bounding box: $5948.76$. Point density: $316.11$
    
    \item \texttt{world}~\cite{tsp}: A $1,904,711$-city TSP instance consisting of all locations in the world that are registered as populated cities or towns, as well as several research bases in Antarctica. Area of the bounding box: $58938.8$.  Point density: $32.32$
    
   \item \texttt{china}~\cite{bus2018practical}: An $1,636,613$-element pointset representing locations in China. Area of the bounding box: $3188.92$. Point density: $513.219$
 
     \item \texttt{nyctaxi}\footnote{\url{https://www.kaggle.com/wikunia/nyc-taxis-combined-with-dimacs/home}}: An $2,917,288$-element pointset representing NYC taxi pickup and drop-off locations.  Area of the bounding box: $1193.77$. Point density: $2443.76$
    
    \item \texttt{uber}\footnote{\url{https://www.kaggle.com/fivethirtyeight/uber-pickups-in-new-york-city}}: An $4,534,327$-element pointset representing Uber pickup locations in New York City. Area of the bounding box: $7.04065$. Point density: $644021.078$
    
    \item \texttt{hail2015}\footnote{\url{https://www.kaggle.com/noaa/severe-weather-data-inventory}}: An $10,824,080$-element pointset representing hail storm cell locations based on NEXRAD radar data obtained in 2015. Area of the bounding box: $17921.8$. Point density: $603.96$
\end{itemize}

\begin{figure}[ht]
	\centering

	\includegraphics[scale=0.66]{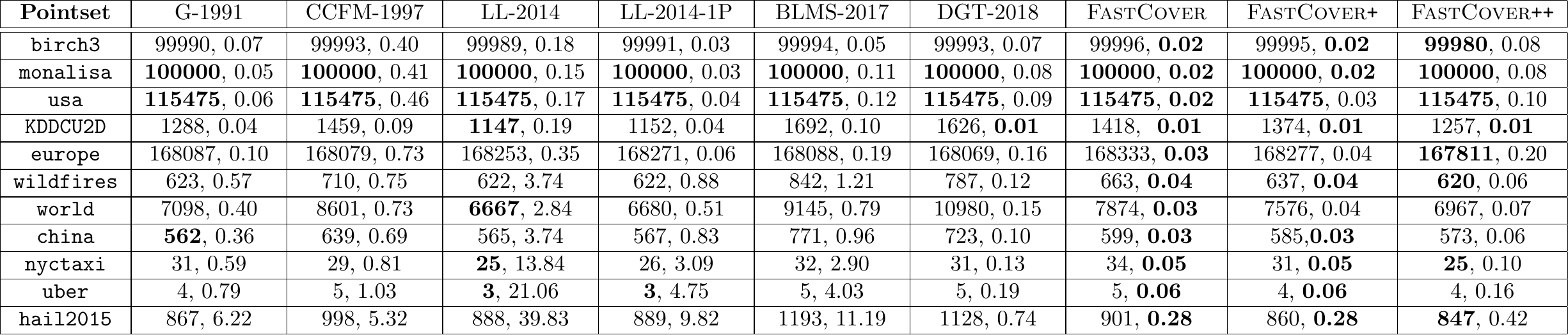}

	\caption{Experimental results for the real-world pointsets.  A pair $x,y$ in a cell denotes the number of disks placed and the running time in seconds, respectively, averaged over five runs of the same pointset. For every pointset, the smallest sized cover(s) and the fastest execution time(s) are shown in bold.}
	\label{fig:realWorld}
\end{figure}

As evident from our experimental results in Fig.~\ref{fig:realWorld} \textsc{FastCover}, \textsc{FastCover}\texttt{+}, and \textsc{FastCover}\texttt{++} performed really well on the real-world pointsets both in terms of speed and cover size. The Heuristics 1 and 2 performed reasonably well wherever possible. For the \texttt{KDDCU2D} pointset, \textsc{FastCover},  \textsc{FastCover}\texttt{+}, and \textsc{FastCover}\texttt{++} have placed $1418$, $1374$, and $1257$ disks, respectively. In this case, the Heuristic 1 could reduce cover size by approximately $3\%$ and the heuristics 1 and 2 together could reduce the size by approximately $11\%$.
For the \texttt{world} pointset, these percentages are $\approx 4$ and $11$, respectively. For these real-world pointsets, there is a negligible difference between \textsc{FastCover} and \textsc{FastCover}\texttt{++} in terms of speed. 
To our surprise, we found that for some of the pointsets such as \texttt{wildfires}, and \texttt{hail2015}, \textsc{FastCover}\texttt{++} has returned the smallest covers and quickly ran to completion. The other algorithms such as  LL-2014-1P, G-1991, BLMS-2017, CCFM-1997, DGT-2018 turned out to be reasonably fast and returned competitive solutions on all the pointsets. G-1991 has placed the lowest number of disks for the \texttt{china} pointset. As in the case of synthetic pointsets, we found LL-2014 to be slower than LL-2014-1P.  Further, LL-2014 could not reduce cover sizes substantially compared to its 1-pass version LL-2014-1P. 

\section{Our recommendations and conclusions}\label{sec:con}

If the sizes of covers are of more interest than real-world running time, we recommend using LL-2014-1P. However, we also observe that in some cases, \textsc{FastCover}\texttt{++} beats LL-2014 both in terms of speed and cover size.

If running time is much more important than cover size, we recommend using \textsc{FastCover}. Otherwise, we recommend using \textsc{FastCover}\texttt{++} since it is very fast in practice and at the same time can generate low-sized covers. 
If it is known that the input pointsets are always convex, we recommend \textsc{FastCover}\texttt{++} since in this case, it beats every other algorithm both in terms of speed and cover size.  

Overall, we find that either LL-2014-1P or \textsc{FastCover}\texttt{++} is leading when it comes to the number of disks placed. 
So, in situations, where running time is not so 
important but the cover size is, one can run both LL-2014-1P and 
\textsc{FastCover}\texttt{++} and 
take the best of the two covers.

\medspace

\noindent
\textbf{Acknowledgments.} We sincerely thank all the anonymous reviewers of this manuscript for their careful reading and their numerous insightful comments and suggestions for improvements. \change{We especially thank the anonymous reviewer who has generously shared improved implementations with us for the algorithms G-1991, BLMS-2017, DGT-2018, DGT-2018, and CCFM-1997. }


%
%

\newpage

\section{Plots and tables}\label{sec:plots}

In the interest of space, we avoid legend tables everywhere in our plots. Since the legends are used uniformly throughout this work, we present them here for an easy reference; refer to Fig.~\ref{fig:legends}. In the tables, we mark the fastest execution time and the smallest covers in bold. 

\begin{figure}[h]
	\centering
	\begin{tikzpicture}
		\begin{axis}[legend pos=south west,	hide axis, xmin=10,	xmax=0, ymin=0, ymax=0.4,legend columns=5]
			\addlegendimage{color=blue,mark=star}
			\addlegendentry{G-1991};
			
			\addlegendimage{color=red,mark=diamond*}
			\addlegendentry{CCFM-1997};
			
			\addlegendimage{color=black,mark options={fill=yellow},mark=square*}
			\addlegendentry{\textsc{LL-2014-1P}};
			
			\addlegendimage{color=orange,mark=triangle*}
			\addlegendentry{LL-2014};
			
			\addlegendimage{color=black,mark=square*}
			\addlegendentry{BLMS-2017};
			
			\addlegendimage{color=teal,mark=pentagon*}
			\addlegendentry{DGT-2018};
			
			\addlegendimage{color=black,mark options={fill=yellow},mark=otimes*}
			\addlegendentry{\textsc{FastCover}};
			
			\addlegendimage{color=black,mark options={fill=yellow},mark=oplus*}
			\addlegendentry{\textsc{FastCover\texttt{+}}};
			
			\addlegendimage{ color=purple,mark=*}
			\addlegendentry{\textsc{FastCover\texttt{++}}};
			
		\end{axis}
	\end{tikzpicture}
	\caption{The plot legends.}
	\label{fig:legends}
\end{figure}
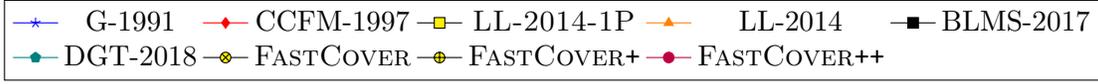

\begin{figure}[H]
	\centering
\includegraphics[scale=0.75]{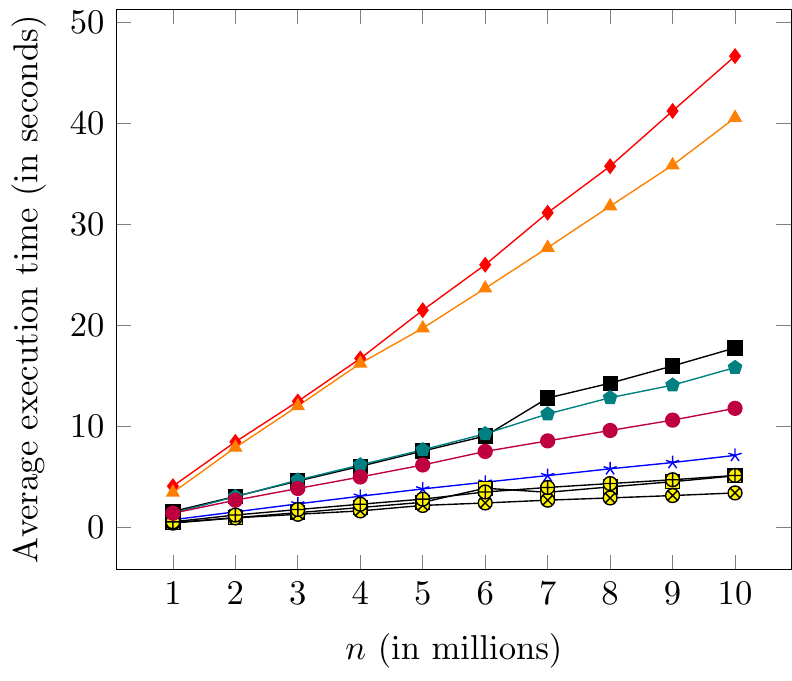} \hspace{55pt}	
\includegraphics[scale=0.75]{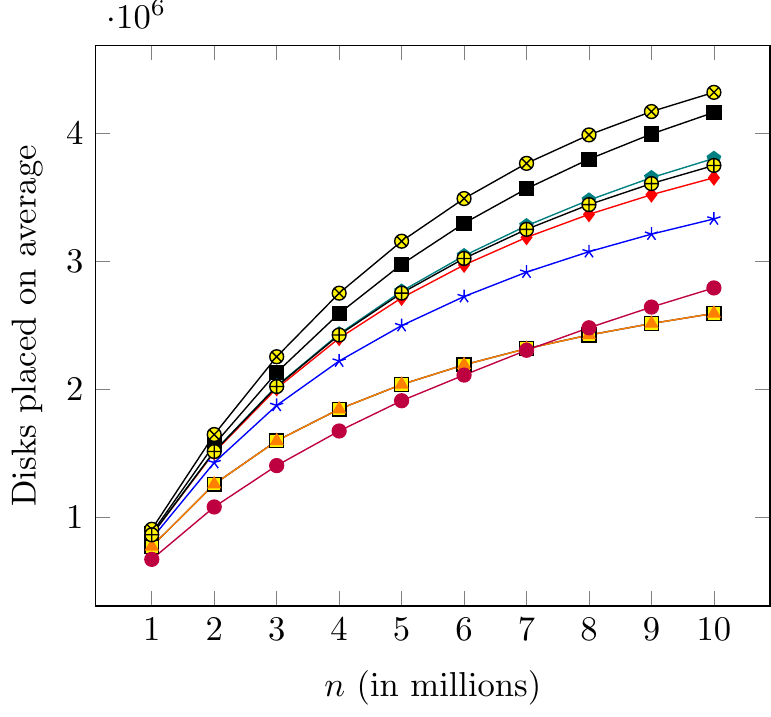}\\ \vspace*{20pt}
\includegraphics[scale=0.58]{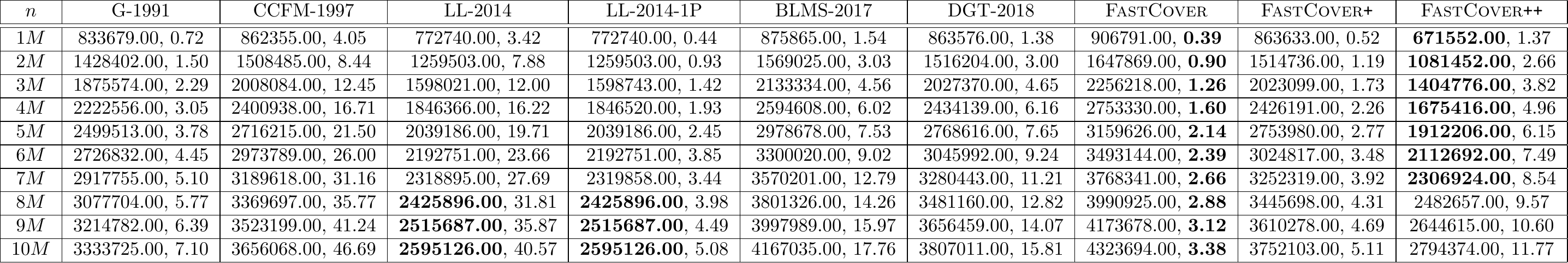}
\caption{Points drawn from a square of area $10^7$. A pair $x,y$ in a cell denotes the average number of disks placed and the average running time  in seconds, respectively.}
\label{fig:PIS-10-7}
\end{figure}

\begin{figure}[H]
	\centering
	\includegraphics[scale=0.75]{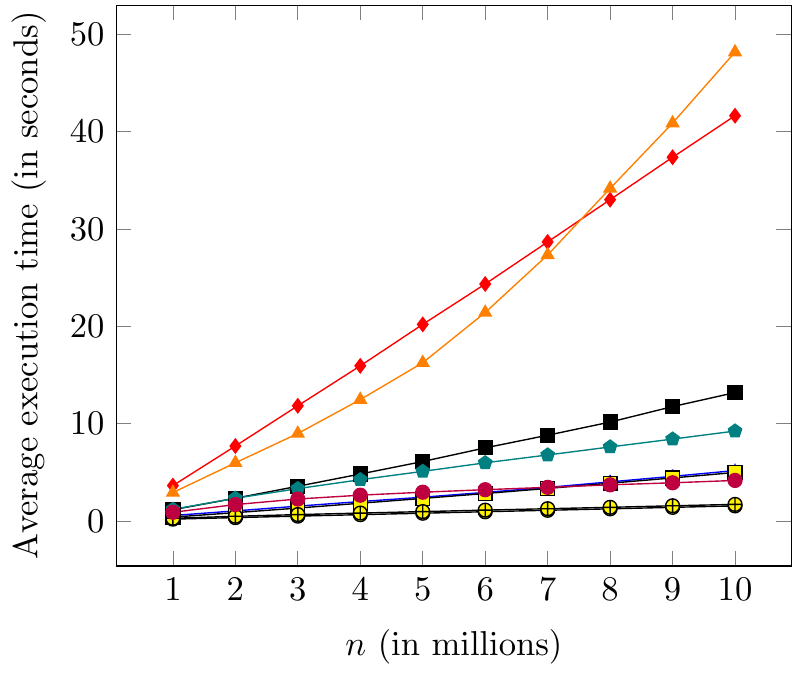} \hspace{55pt}	
	\includegraphics[scale=0.75]{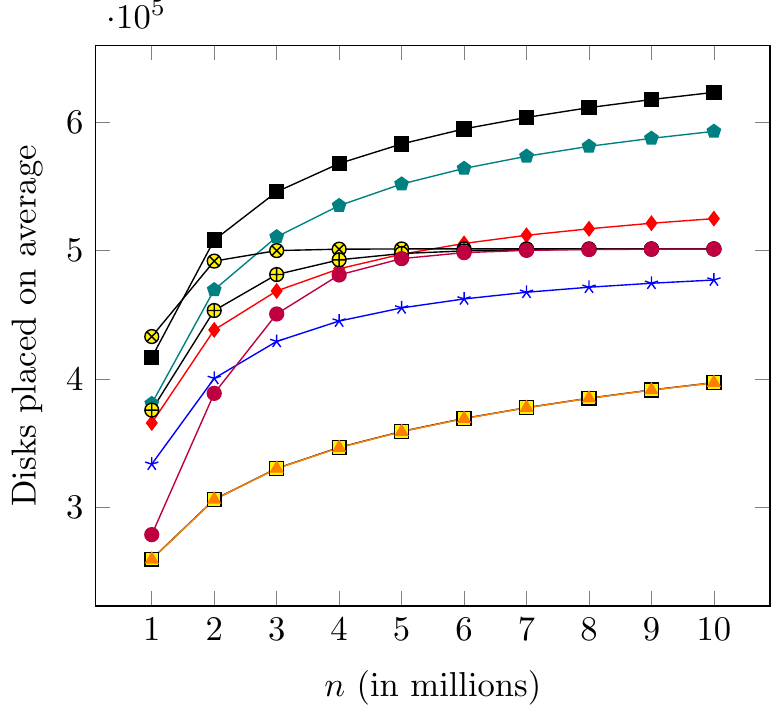}\\ \vspace*{20pt}
	\includegraphics[scale=0.58]{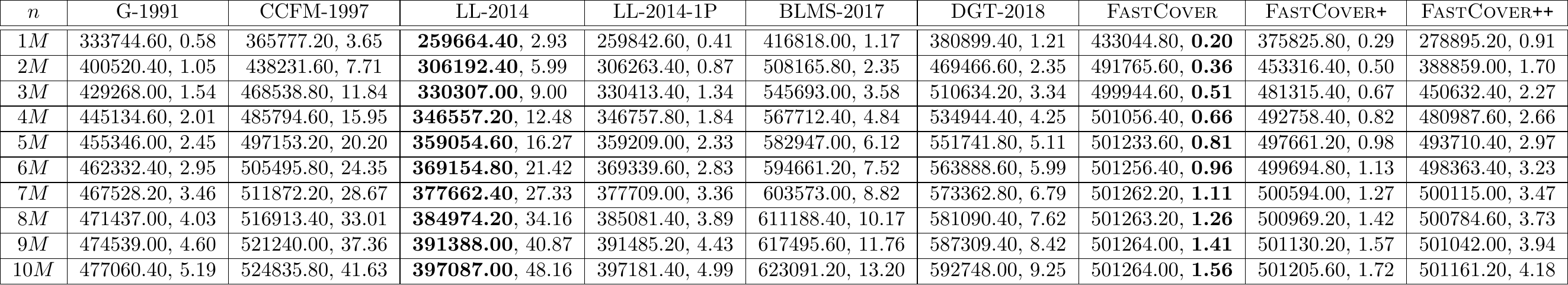}
	\caption{Points drawn from a square of area $10^6$. A pair $x,y$ in a cell denotes the average number of disks placed and the average running time  in seconds, respectively.}
	\label{fig:PIS-10-6}
\end{figure}

\begin{figure}[H]
		\centering
	\includegraphics[scale=0.75]{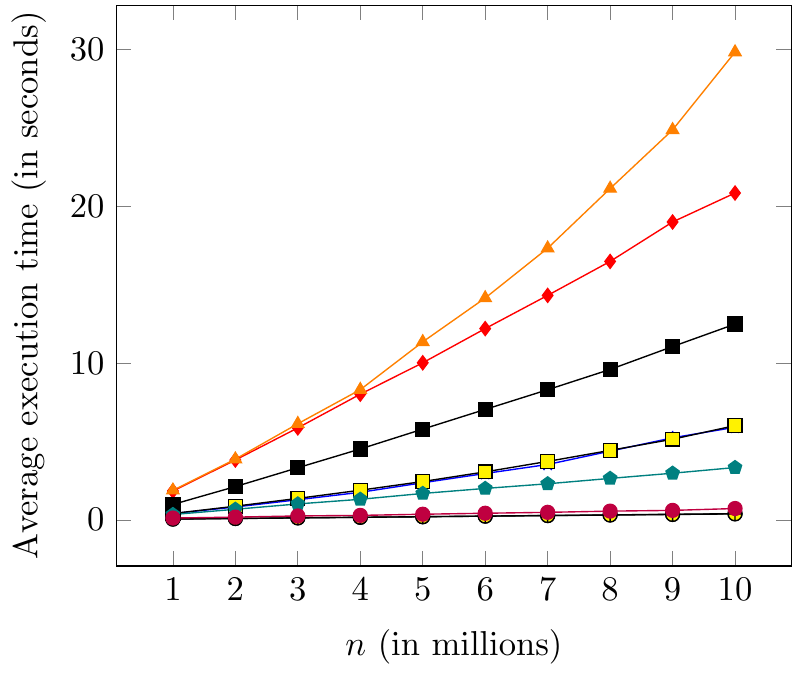} \hspace{55pt}	
	\includegraphics[scale=0.75]{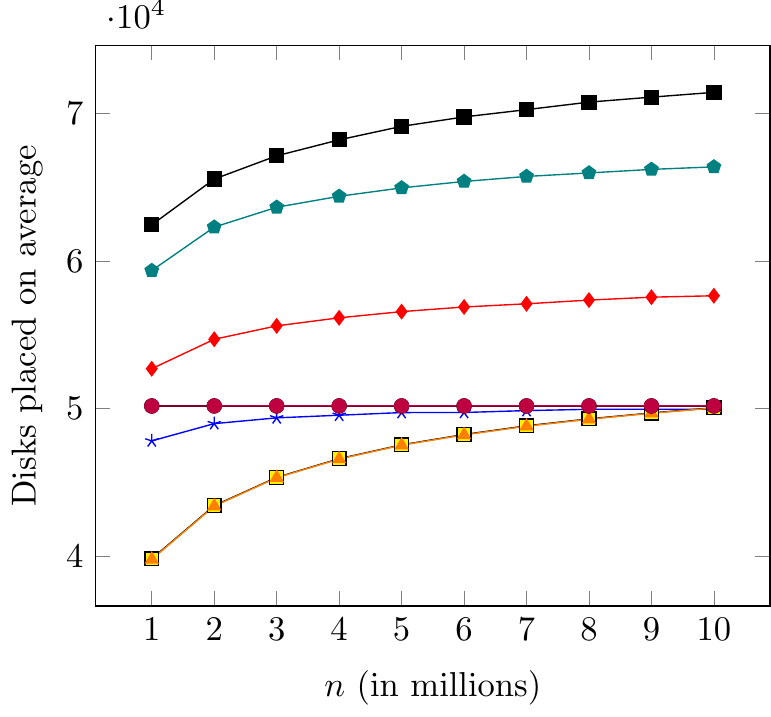}\\ \vspace*{20pt}
	\includegraphics[scale=0.58]{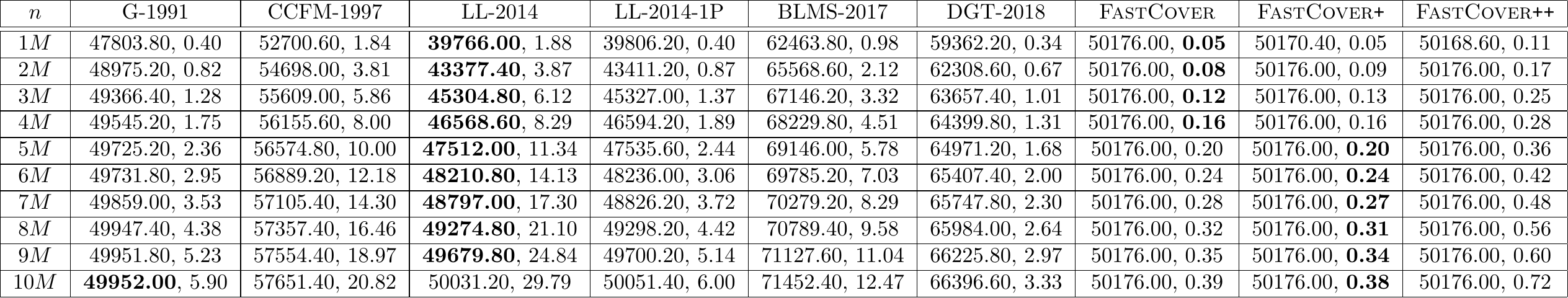}
\caption{Points drawn from a square of area $10^5$. A pair $x,y$ in a cell denotes the average number of disks placed and the average running time in seconds, respectively. }
	\label{fig:PIS-10-5}
\end{figure}

\begin{figure}[H]
	\centering
	\includegraphics[scale=0.75]{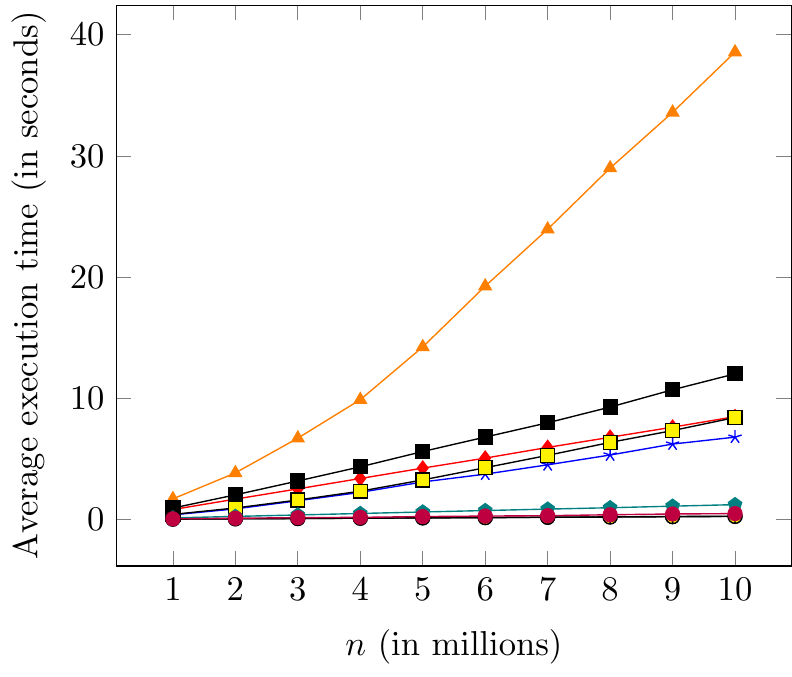} \hspace{55pt}	
	\includegraphics[scale=0.75]{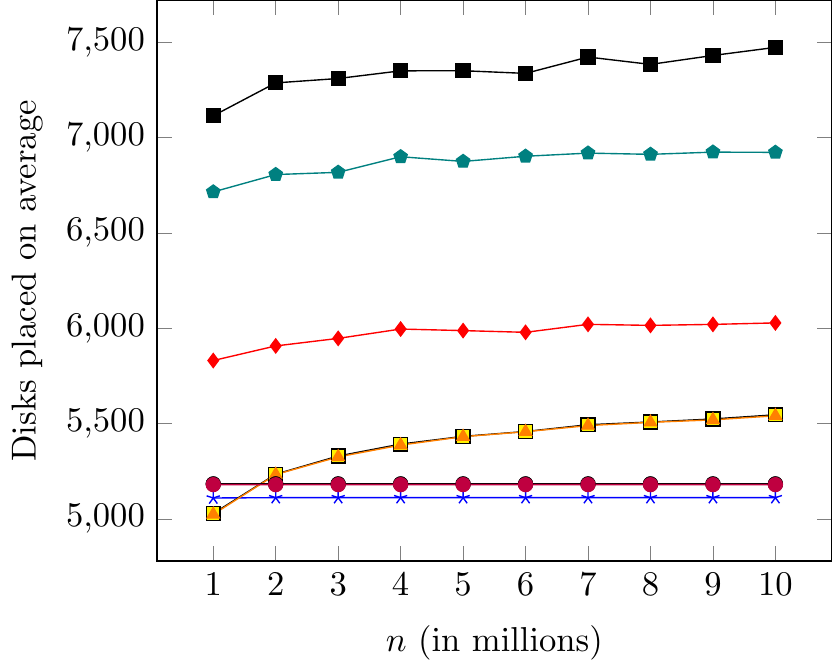}\\ \vspace*{20pt}
	\includegraphics[scale=0.58]{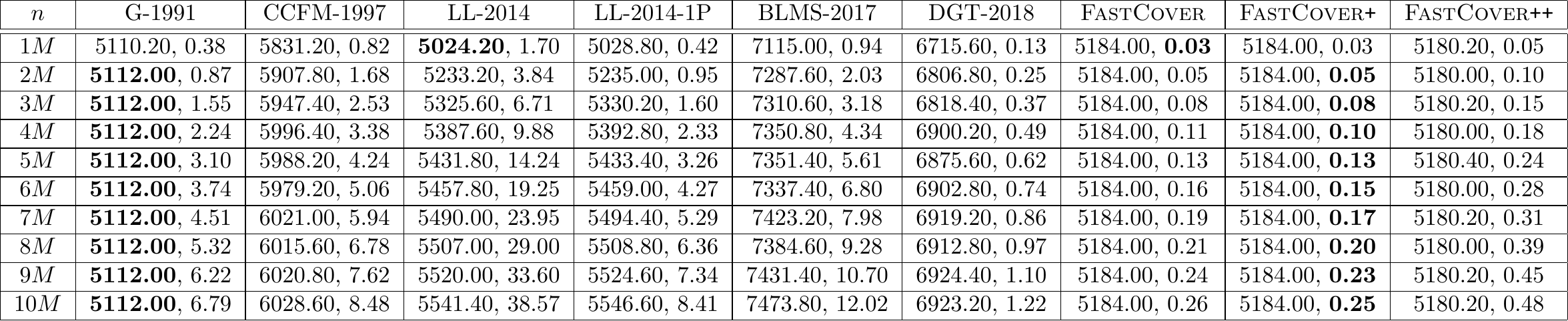}
	\caption{Points drawn from a square of area $10^4$. A pair $x,y$ in a cell denotes the average number of disks placed and the average running time in seconds, respectively. }
	\label{fig:PIS-10-4}
\end{figure}


\begin{figure}[H]
	\centering
	\includegraphics[scale=0.75]{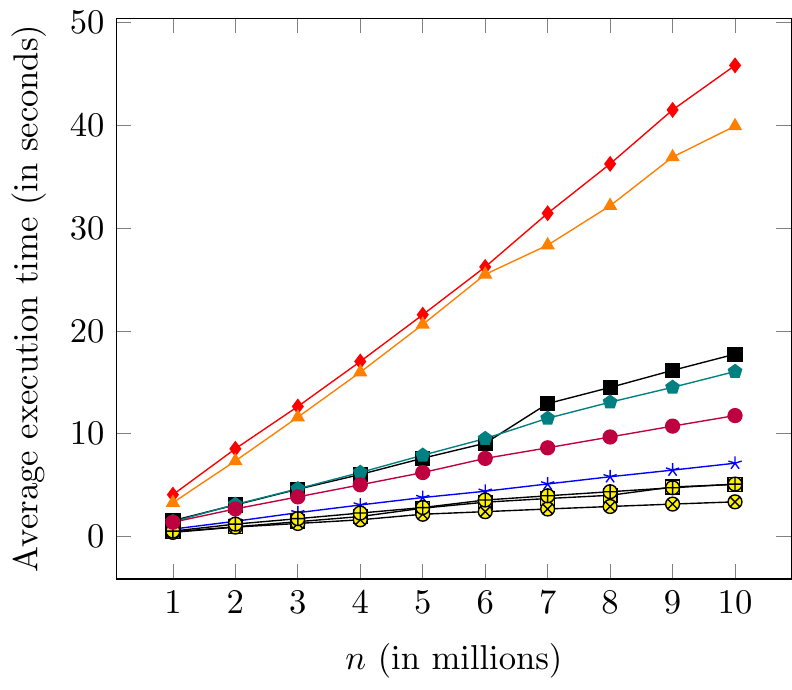} \hspace{55pt}	
	\includegraphics[scale=0.75]{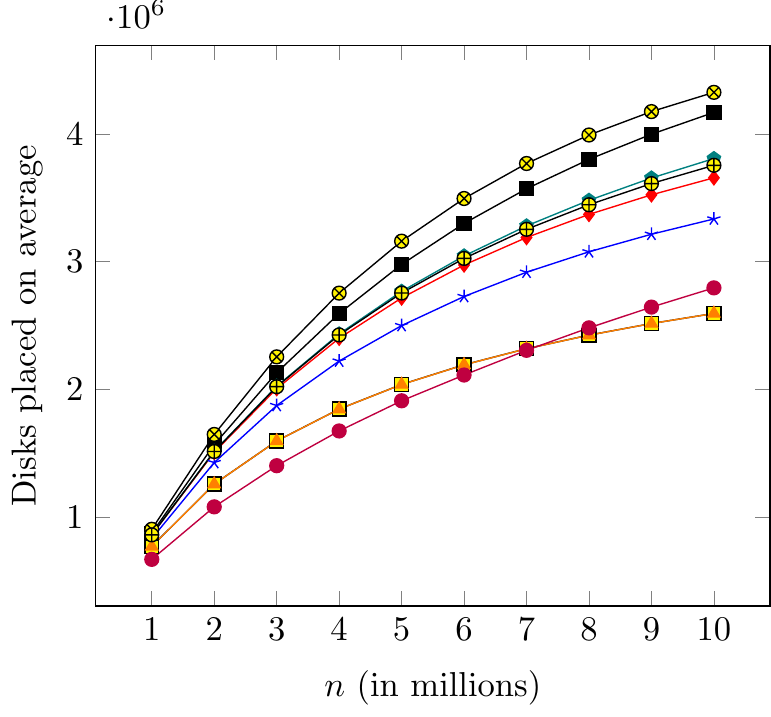}\\ \vspace*{20pt}
	\includegraphics[scale=0.58]{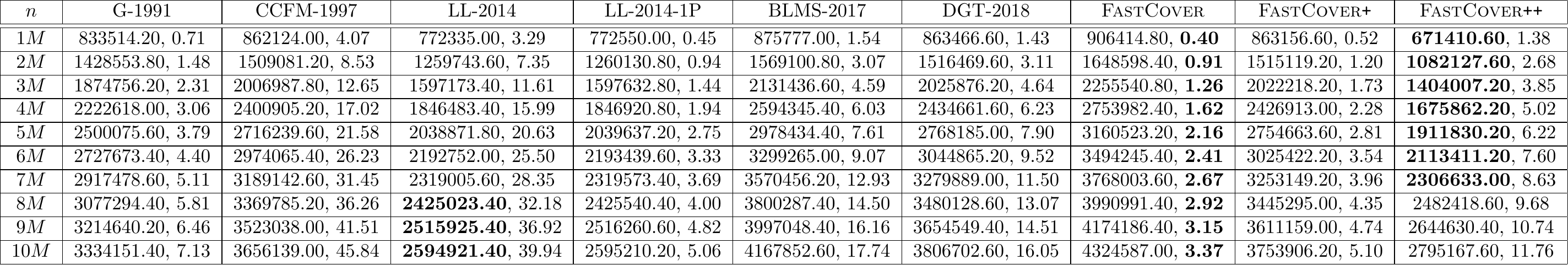}
	\caption{Points drawn from a disk of area $10^7$. A pair $x,y$ in a cell denotes the average number of disks placed and the average running time  in seconds, respectively.}
	\label{fig:PID-10-7}
\end{figure}

\begin{figure}[H]
	\centering
	\includegraphics[scale=0.75]{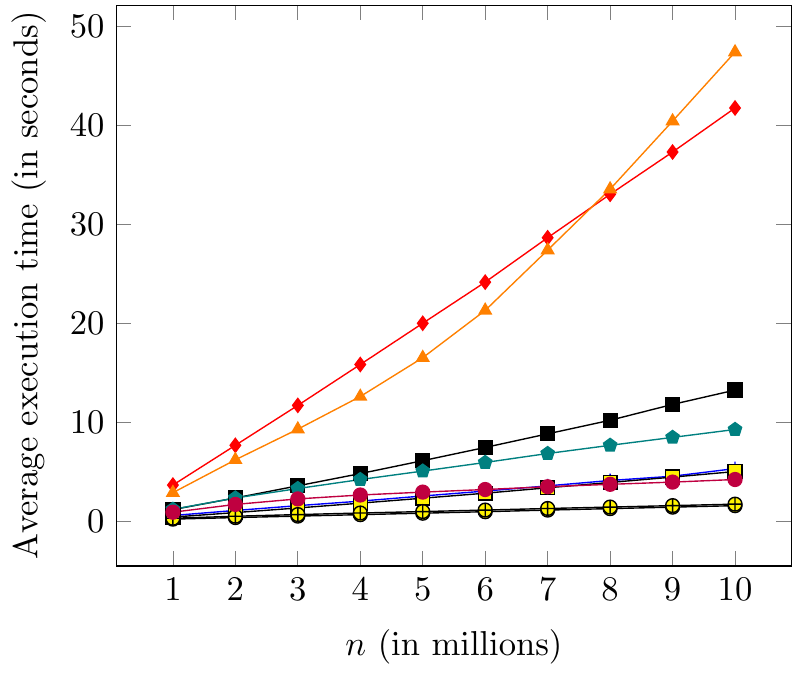} \hspace{55pt}	
	\includegraphics[scale=0.75]{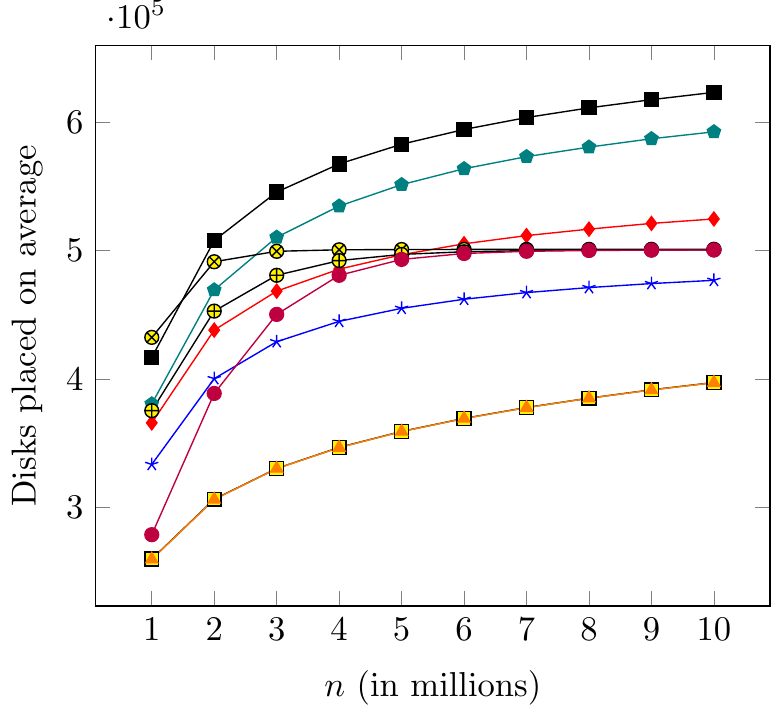}\\ \vspace*{20pt}
	\includegraphics[scale=0.58]{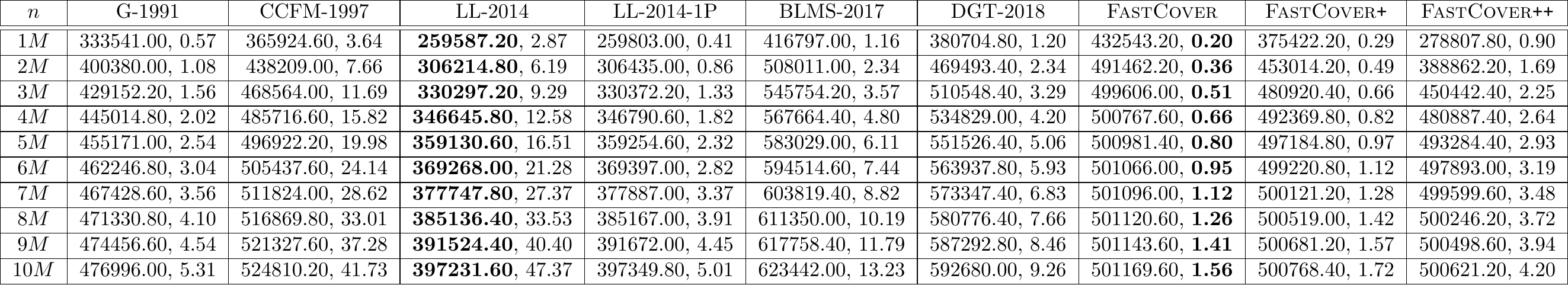}
	\caption{Points drawn from a disk of area $10^6$. A pair $x,y$ in a cell denotes the average number of disks placed and the average running time  in seconds, respectively.}
	\label{fig:PID-10-6}
\end{figure}

\begin{figure}[H]
	\centering
	\includegraphics[scale=0.75]{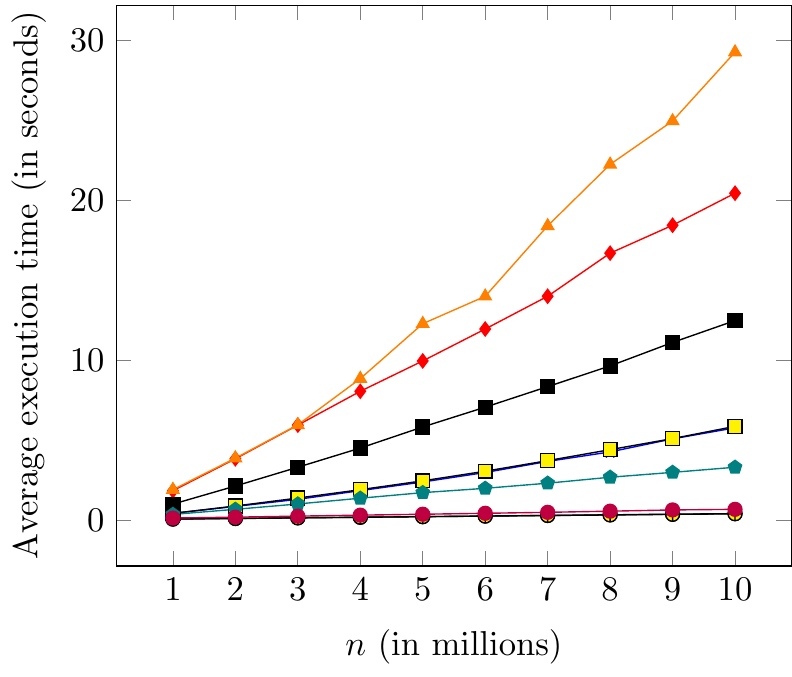} \hspace{55pt}	
	\includegraphics[scale=0.75]{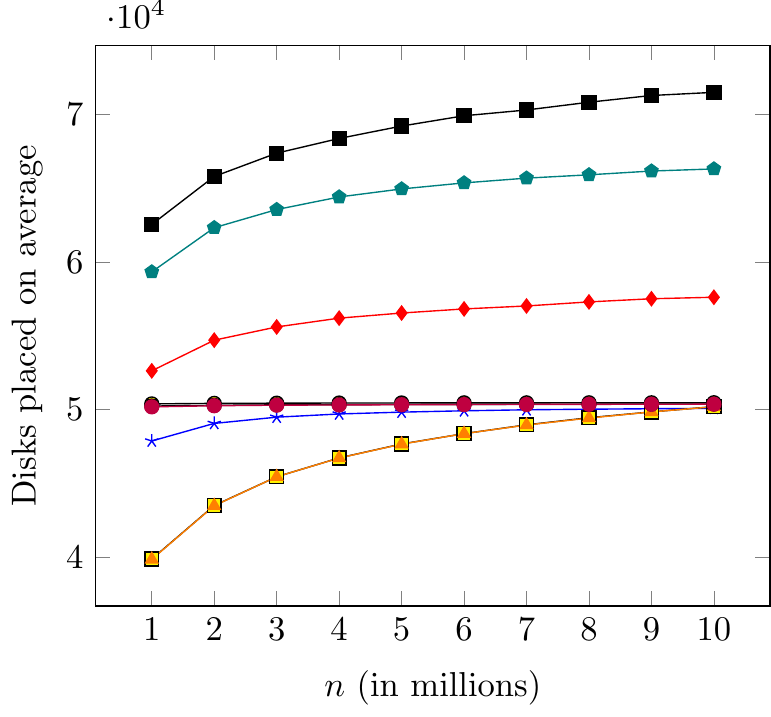}\\ \vspace*{20pt}
	\includegraphics[scale=0.58]{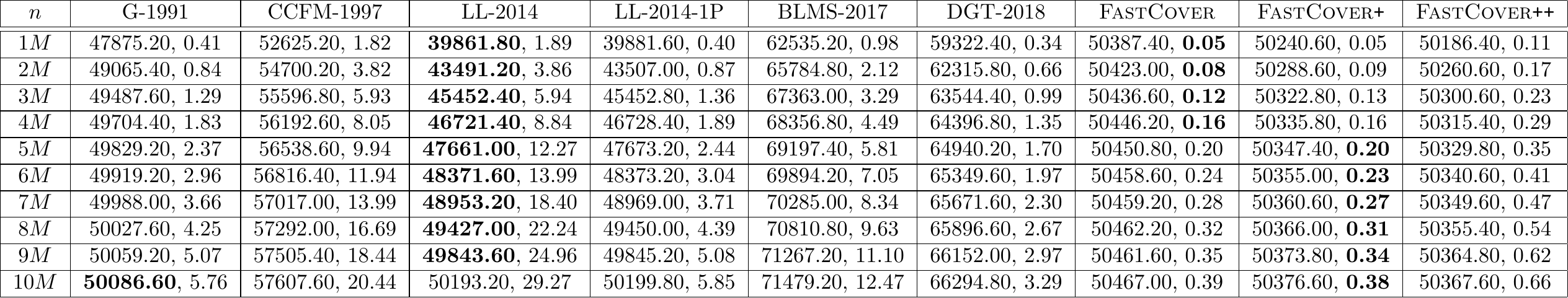}
	\caption{Points drawn from a disk of area $10^5$. A pair $x,y$ in a cell denotes the average number of disks placed and the average running time in seconds, respectively. }
	\label{fig:PID-10-5}
\end{figure}

\begin{figure}[H]
	\centering
	\includegraphics[scale=0.75]{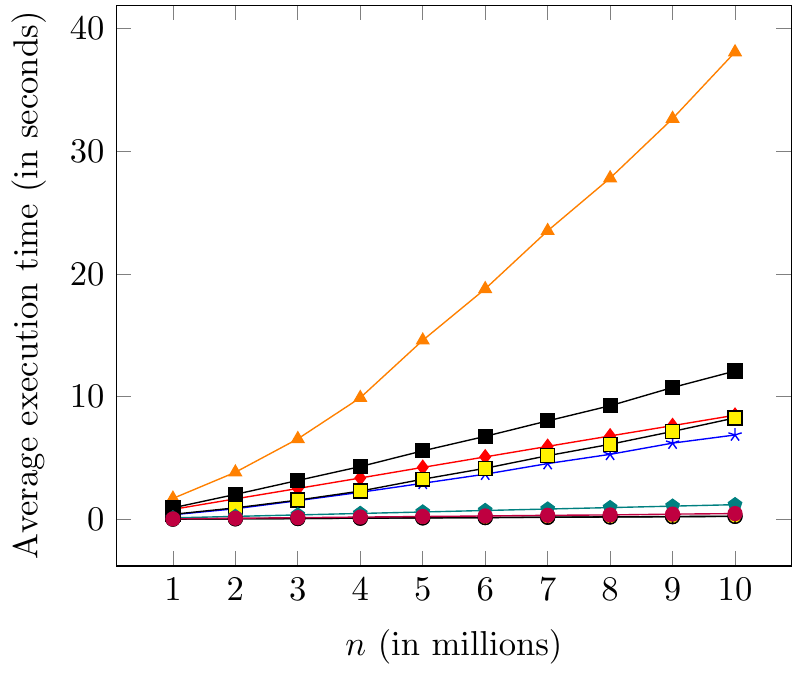} \hspace{55pt}	
	\includegraphics[scale=0.75]{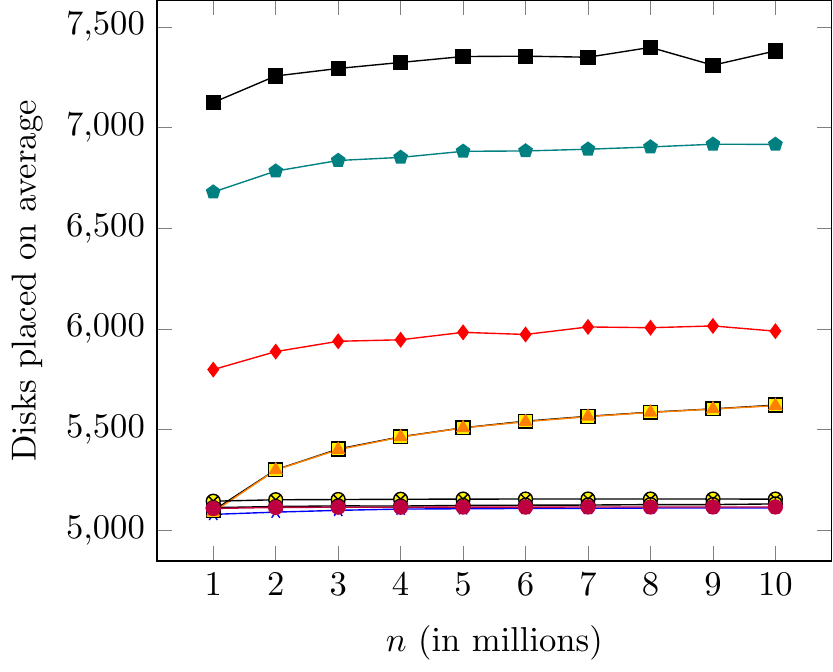}\\ \vspace*{20pt}
	\includegraphics[scale=0.58]{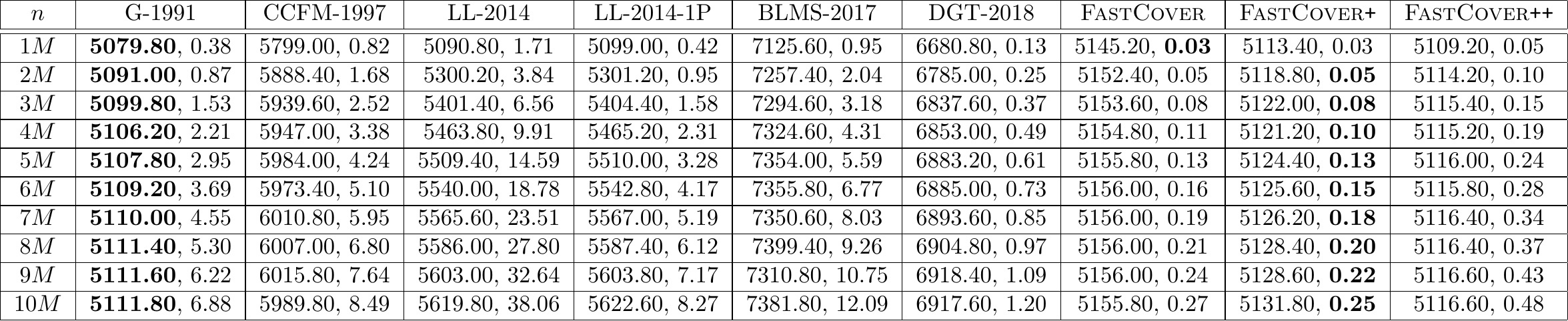}
	\caption{Points drawn from a disk of area $10^4$. A pair $x,y$ in a cell denotes the average number of disks placed and the average running time in seconds, respectively. }
	\label{fig:PID-10-4}
\end{figure}

\begin{figure}[h]
\centering
\includegraphics[scale=0.75]{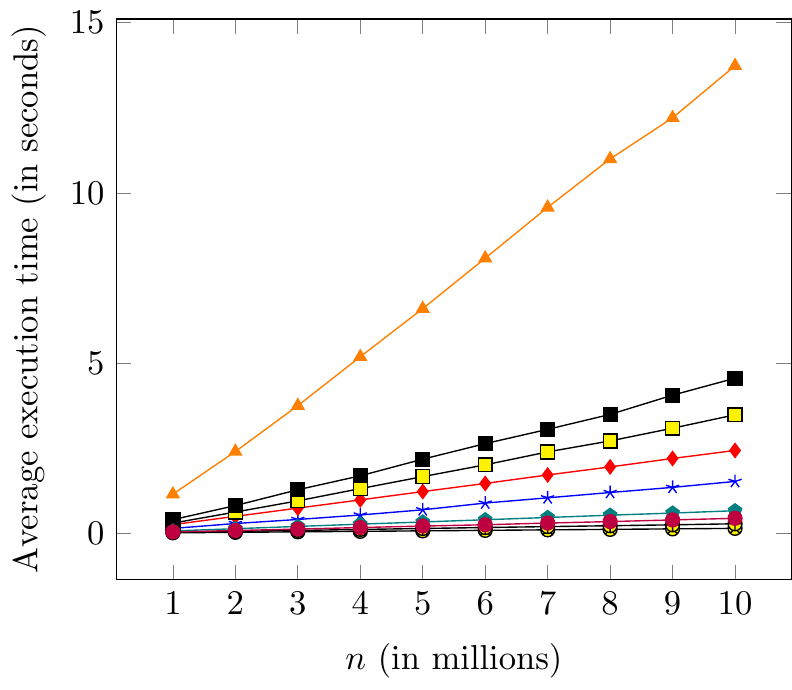} \hspace{55pt}	
\includegraphics[scale=0.75]{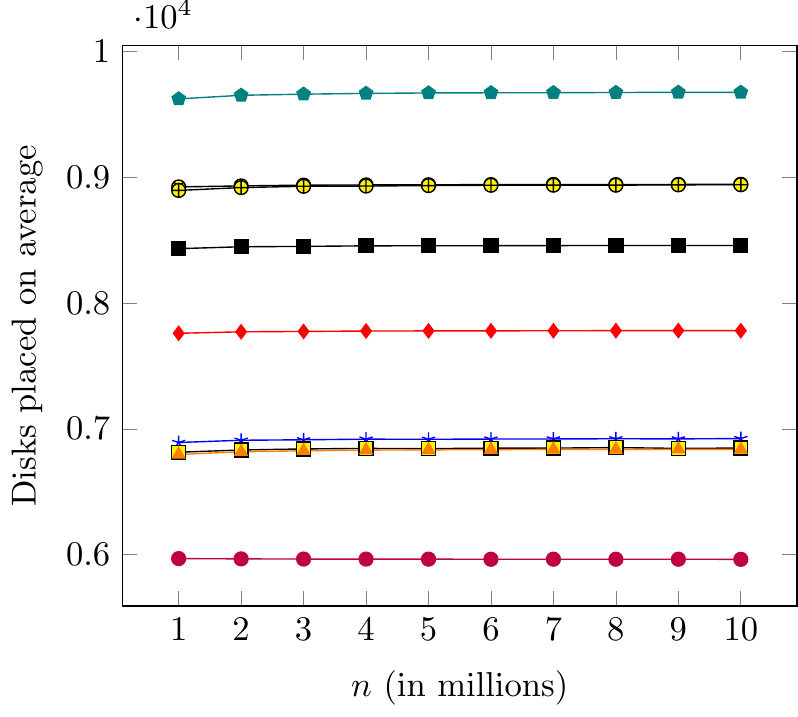}\\ \vspace*{20pt}
\includegraphics[scale=0.58]{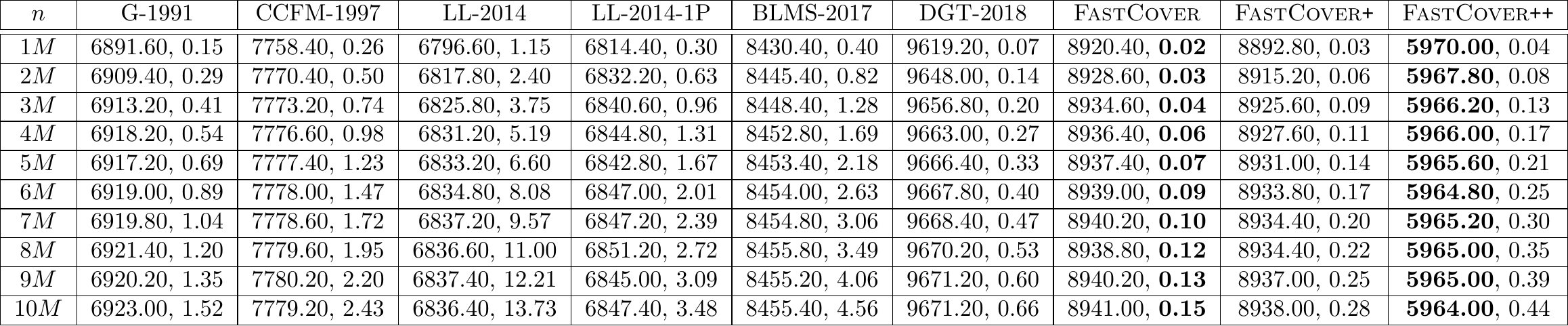}
\caption{Convex pointsets drawn from a square of area $10^7$. A pair $x,y$ in a cell denotes the average number of disks placed and the average running time in seconds, respectively.}
\label{fig:Conv-10-7}
\end{figure}

\begin{figure}[h]
	\centering
	\includegraphics[scale=0.75]{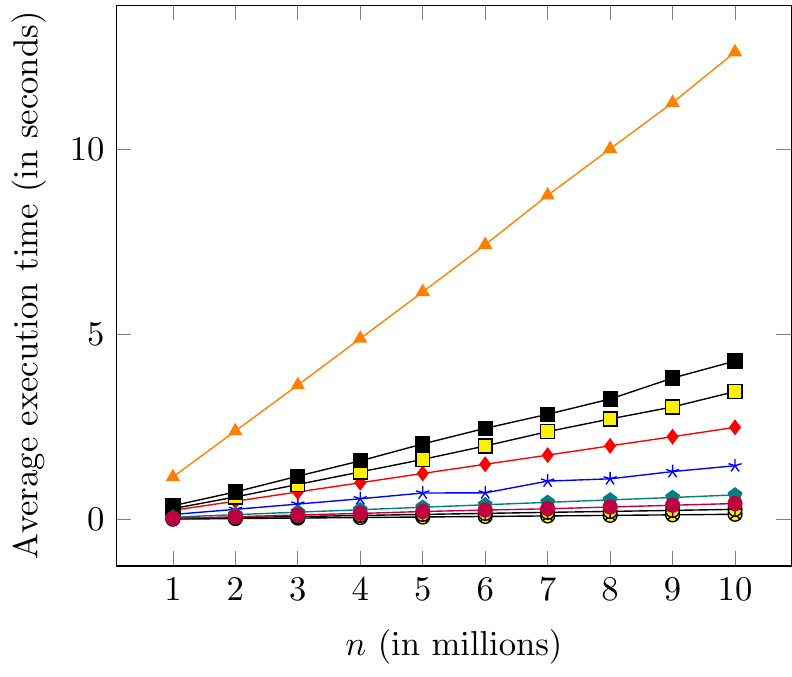} \hspace{55pt}	
	\includegraphics[scale=0.75]{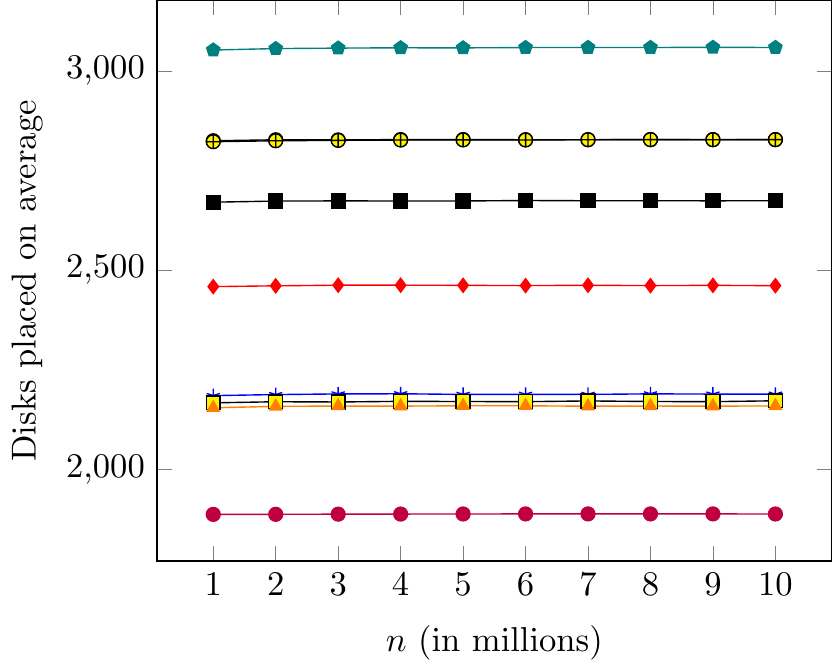}\\ \vspace*{20pt}
	\includegraphics[scale=0.58]{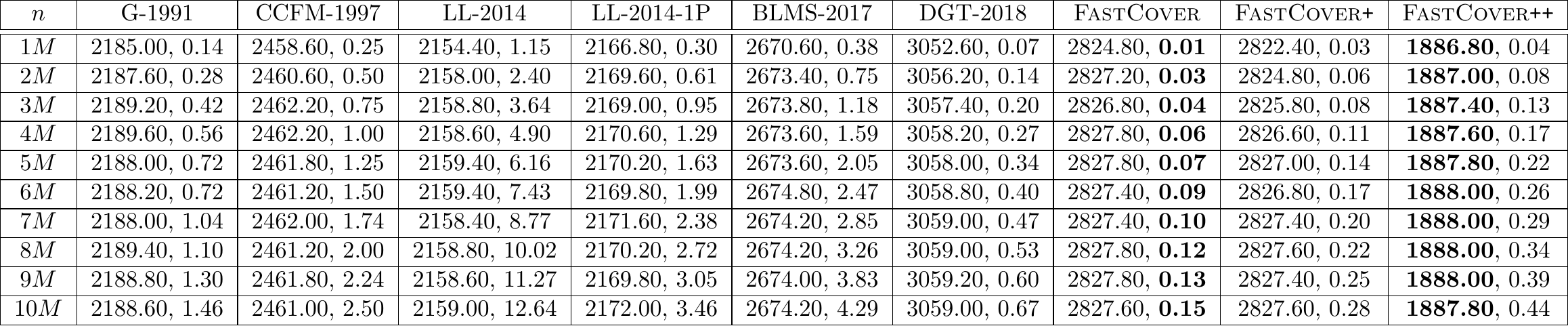}
	\caption{Convex pointsets drawn from a square of area $10^6$. A pair $x,y$ in a cell denotes the average number of disks placed and the average running time in seconds, respectively.}
	\label{fig:Conv-10-6}
\end{figure}

\begin{figure}[h]
	\centering
	\includegraphics[scale=0.75]{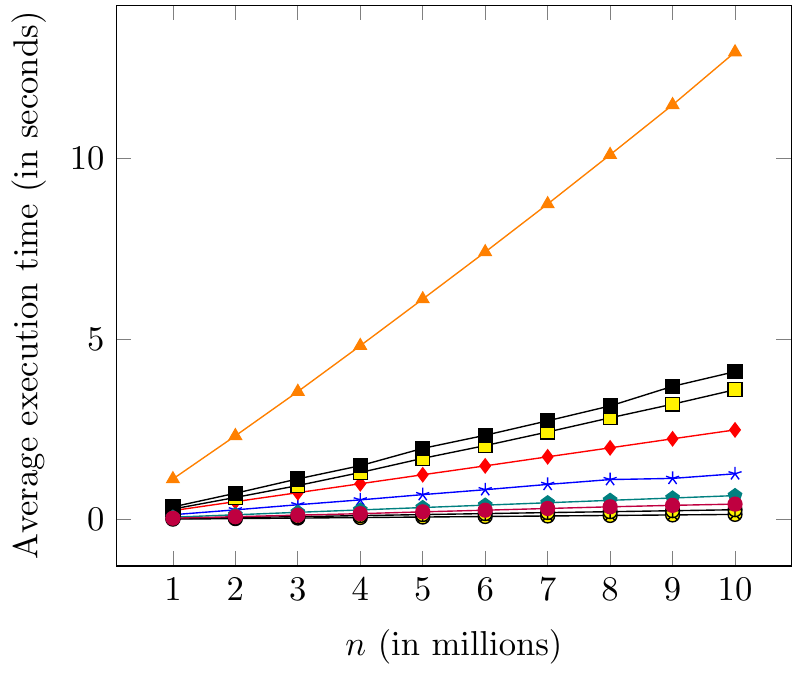} \hspace{55pt}	
	\includegraphics[scale=0.75]{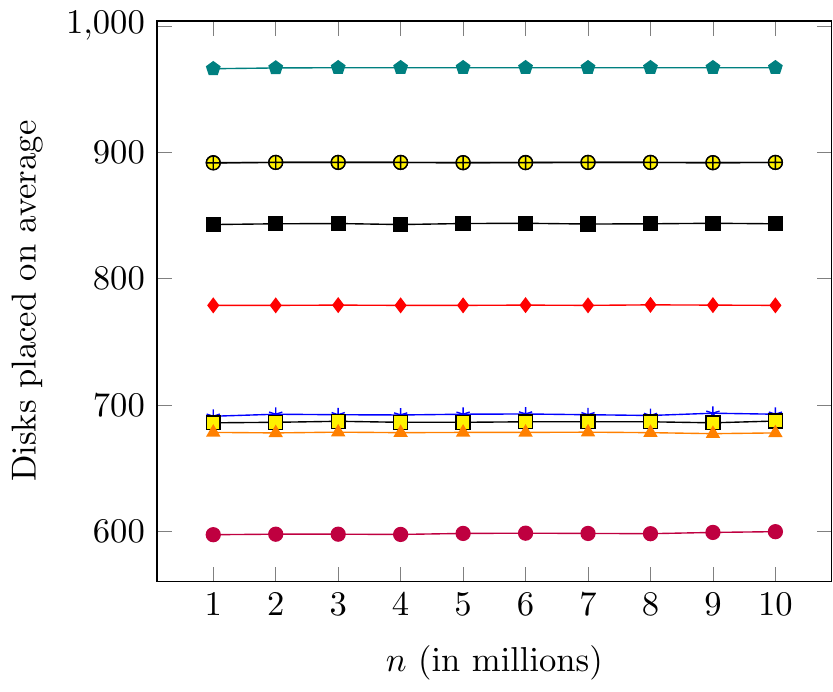}\\ \vspace*{20pt}
	\includegraphics[scale=0.58]{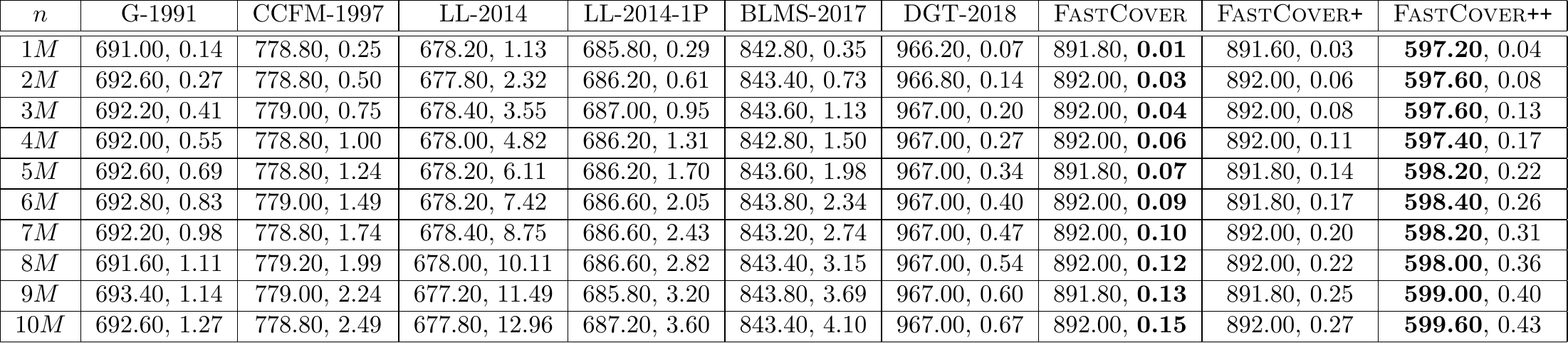}
	\caption{Convex pointsets drawn from a square of area $10^5$. A pair $x,y$ in a cell denotes the average number of disks placed and the average running time in seconds, respectively.}
	\label{fig:Conv-10-5}
\end{figure}

\begin{figure}[h]
	\centering
	\includegraphics[scale=0.75]{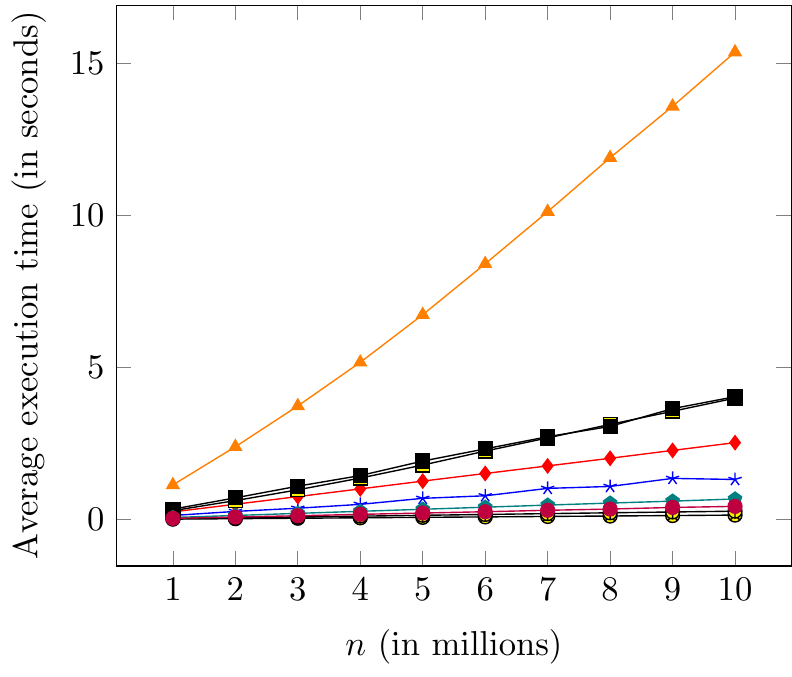} \hspace{55pt}	
	\includegraphics[scale=0.75]{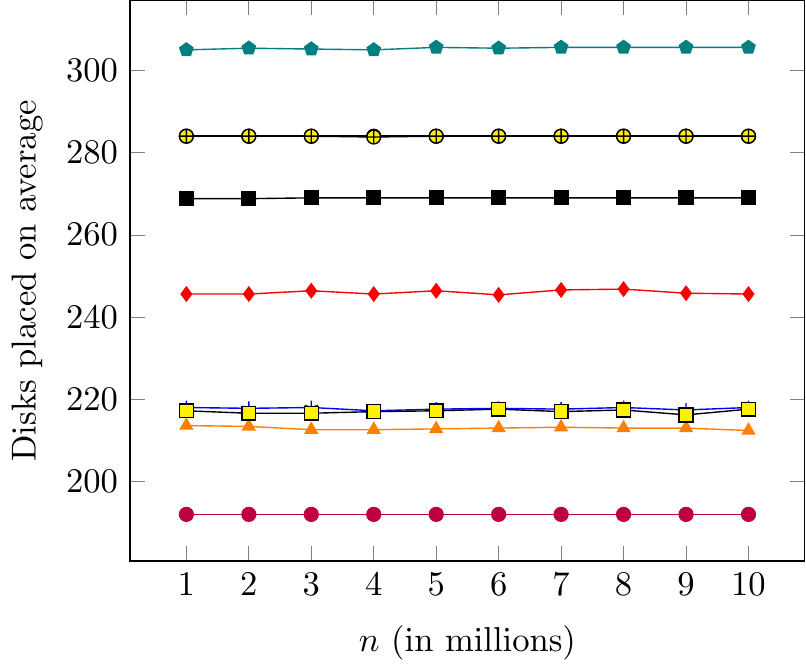}\\ \vspace*{20pt}
	\includegraphics[scale=0.58]{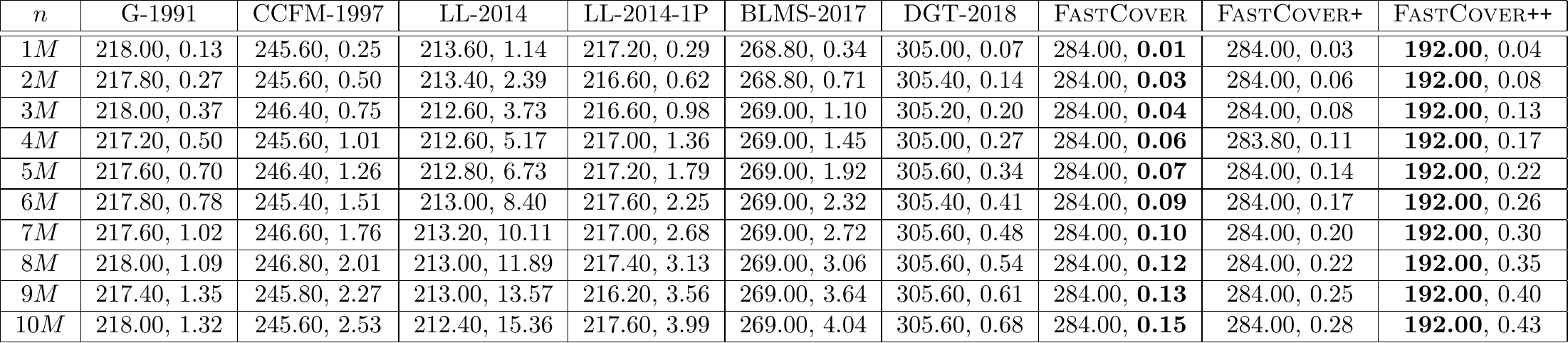}
	\caption{Convex pointsets drawn from a square of area $10^4$. A pair $x,y$ in a cell denotes the average number of disks placed and the average running time in seconds, respectively.}
	\label{fig:Conv-10-4}
\end{figure}

\newpage

\begin{figure}[h]
	\centering
\includegraphics[scale=0.75]{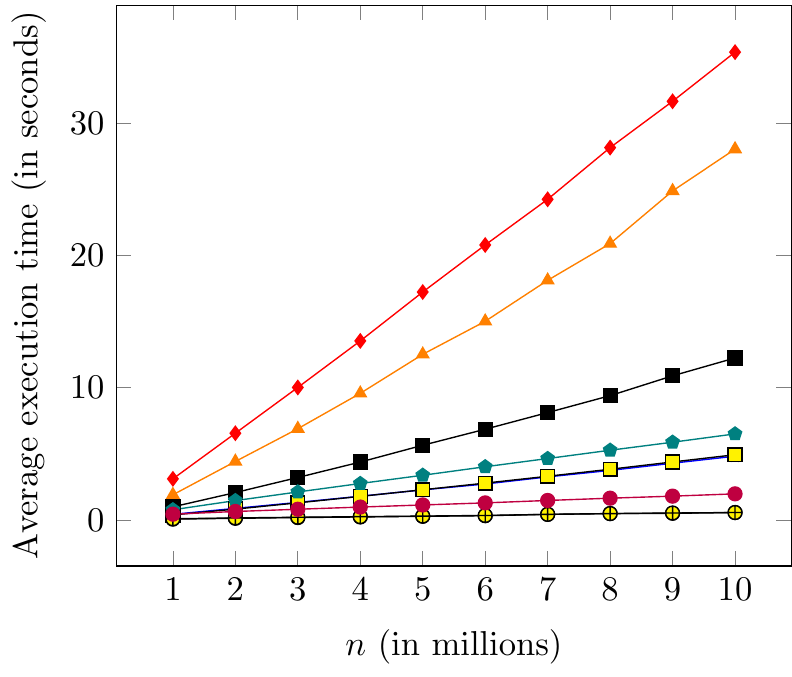} \hspace{55pt}	
\includegraphics[scale=0.75]{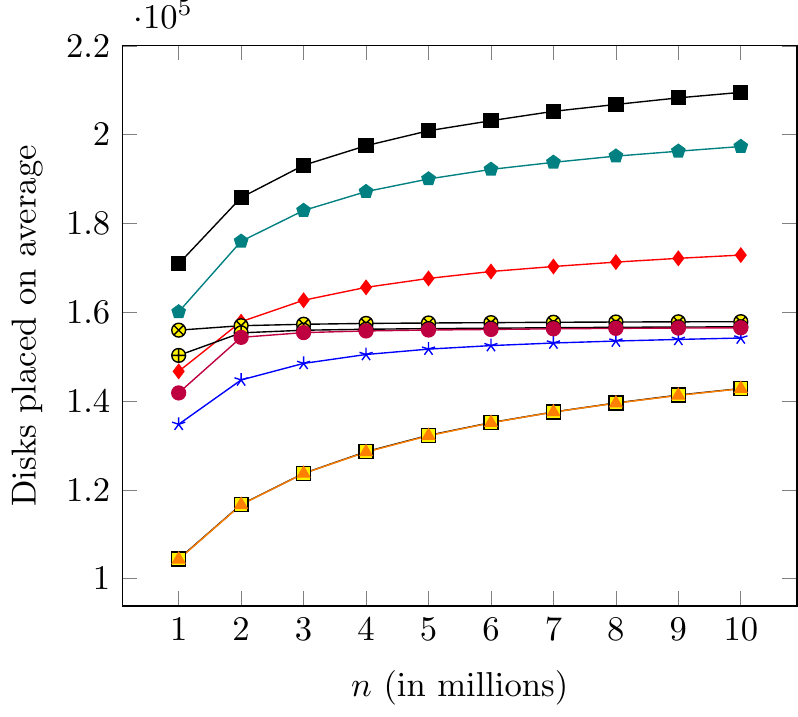}\\ \vspace*{20pt}
\includegraphics[scale=0.58]{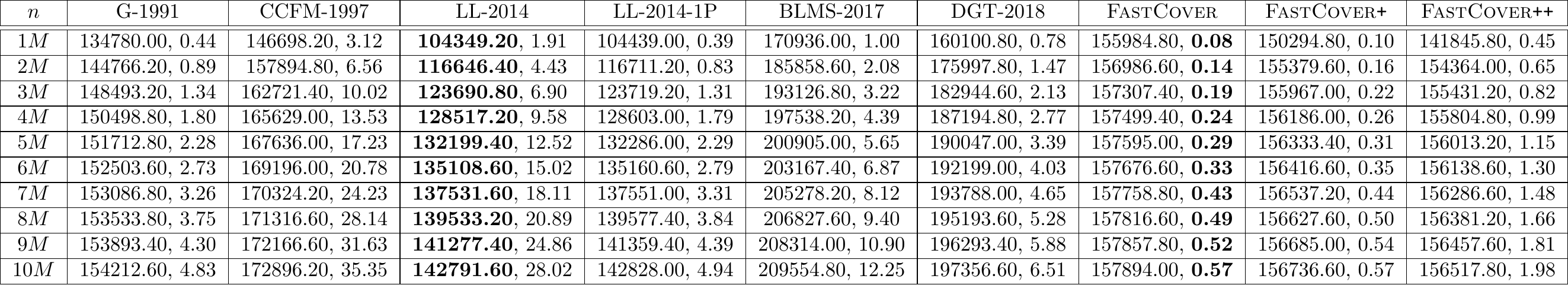}
\caption{Points drawn from an annulus. Radius of the outer circle is $10^3$ and that of the inner circle is $0.95\cdot 10^3$. A pair $x,y$ in a cell denotes the average number of disks placed and the average running time in seconds, respectively.}
	\label{fig:Ann-95}
\end{figure}

\begin{figure}[h]
	\centering
	\includegraphics[scale=0.75]{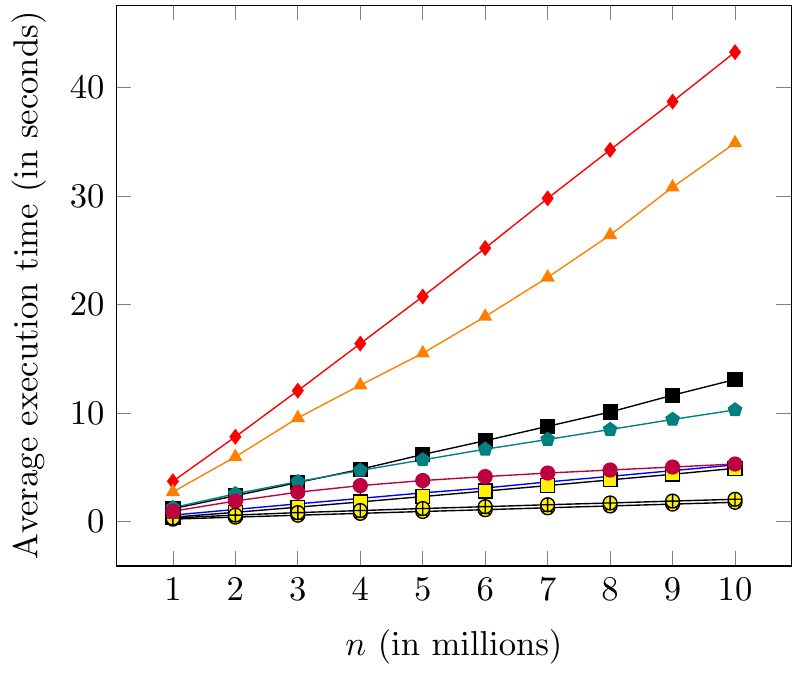} \hspace{55pt}	
	\includegraphics[scale=0.75]{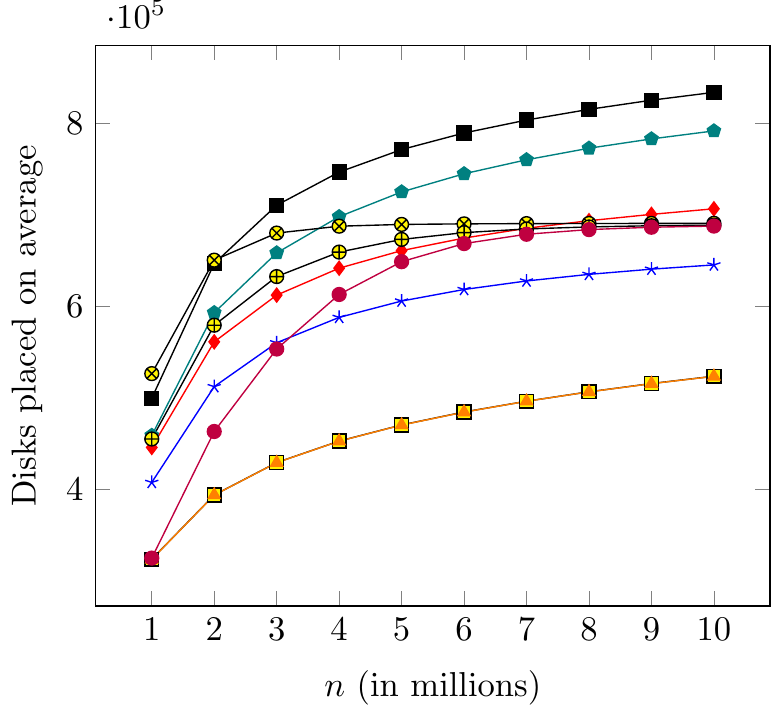}\\ \vspace*{20pt}
	\includegraphics[scale=0.58]{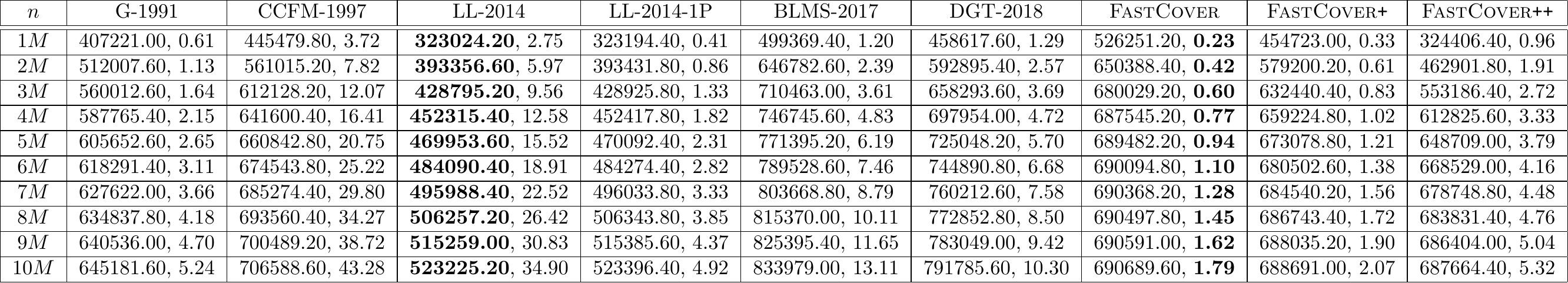}
	\caption{Points drawn from an annulus. Radius of the outer circle is $10^3$ and that of the inner circle is $0.75\cdot 10^3$. A pair $x,y$ in a cell denotes the average number of disks placed and the average running time in seconds, respectively.}
	\label{fig:Ann-75}
\end{figure}

\begin{figure}[h]
	\centering
	\includegraphics[scale=0.75]{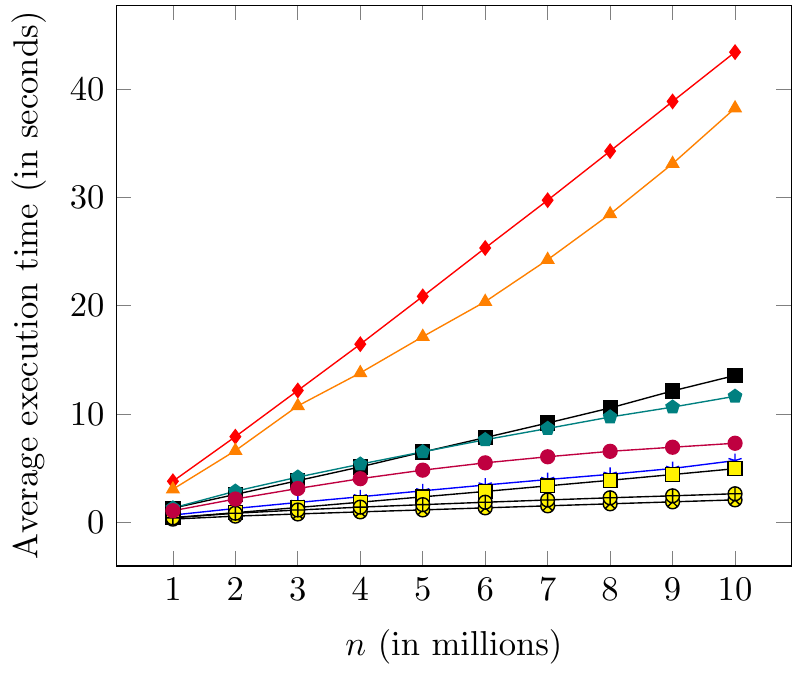} \hspace{55pt}	
	\includegraphics[scale=0.75]{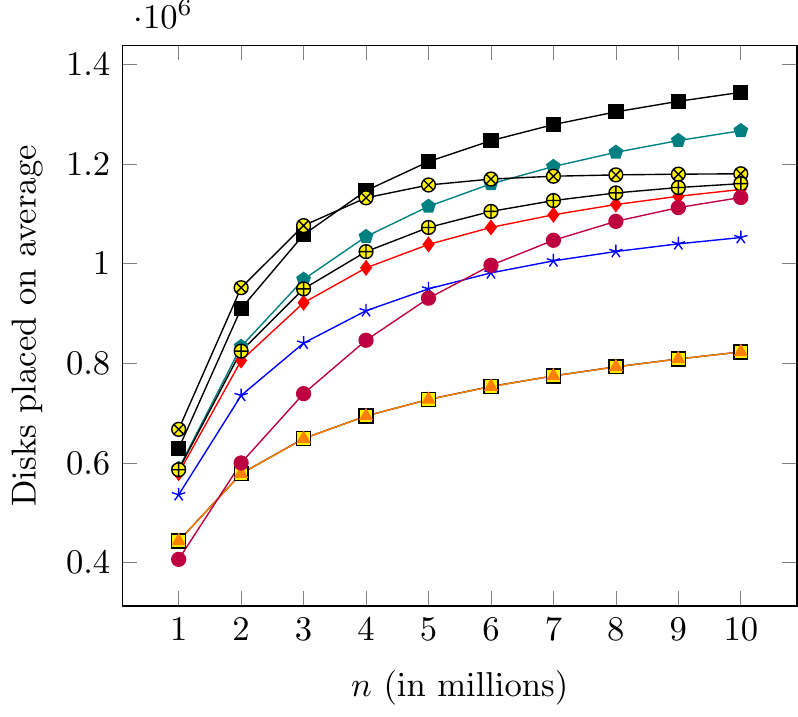}\\ \vspace*{20pt}
	\includegraphics[scale=0.58]{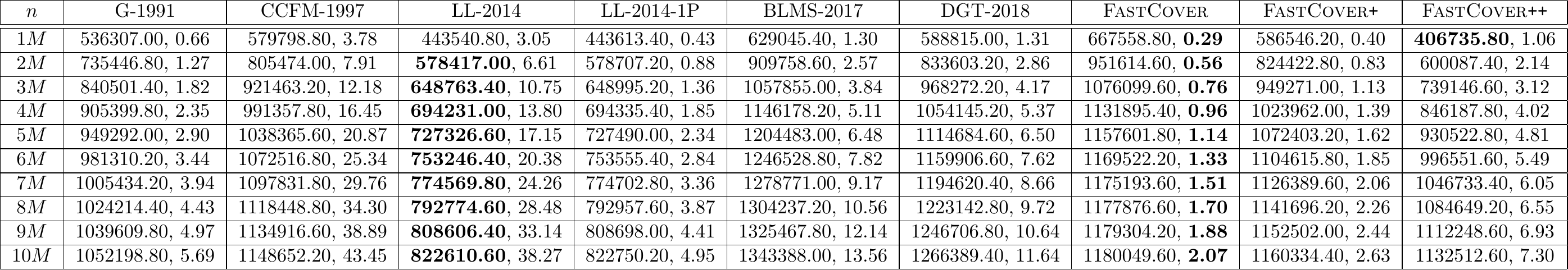}
	\caption{Points drawn from an annulus. Radius of the outer circle is $10^3$ and that of the inner circle is $0.5\cdot 10^3$. A pair $x,y$ in a cell denotes the average number of disks placed and the average running time in seconds, respectively.}
	\label{fig:Ann-50}
\end{figure}

\newpage
\clearpage

\section{A visual comparison of the engineered algorithms} \label{sec:demo}

\begin{figure}[h]
	\centering
	\includegraphics[scale=0.33]{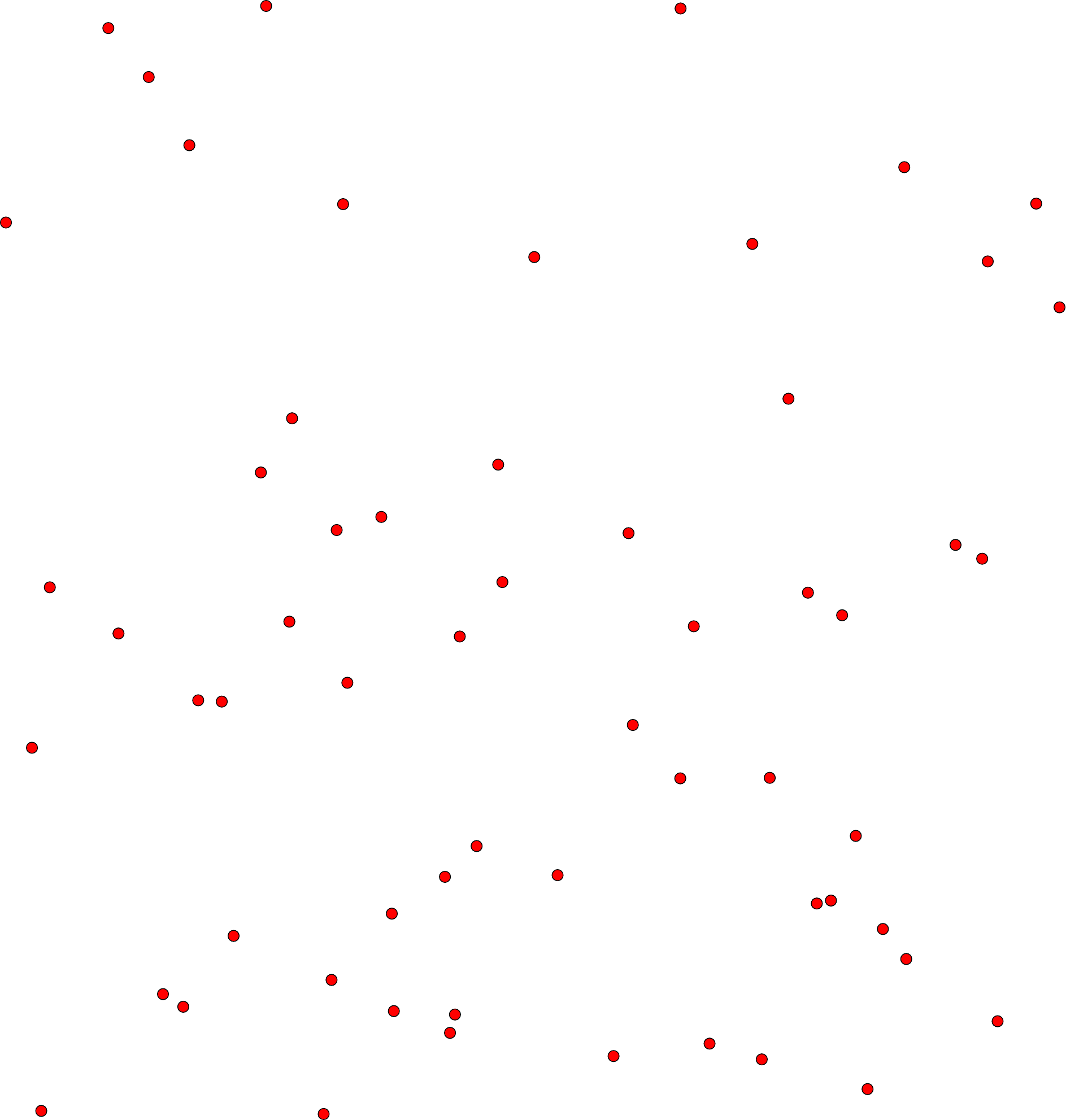}
	\caption{A $60$-element pointset drawn from a $20 \times 20$ square using \texttt{CGAL::Random\_points\_in\_square\_2}.}
	\label{fig:demopointset}
\end{figure}

\begin{figure}[h]
	\centering
	\includegraphics[scale=0.33]{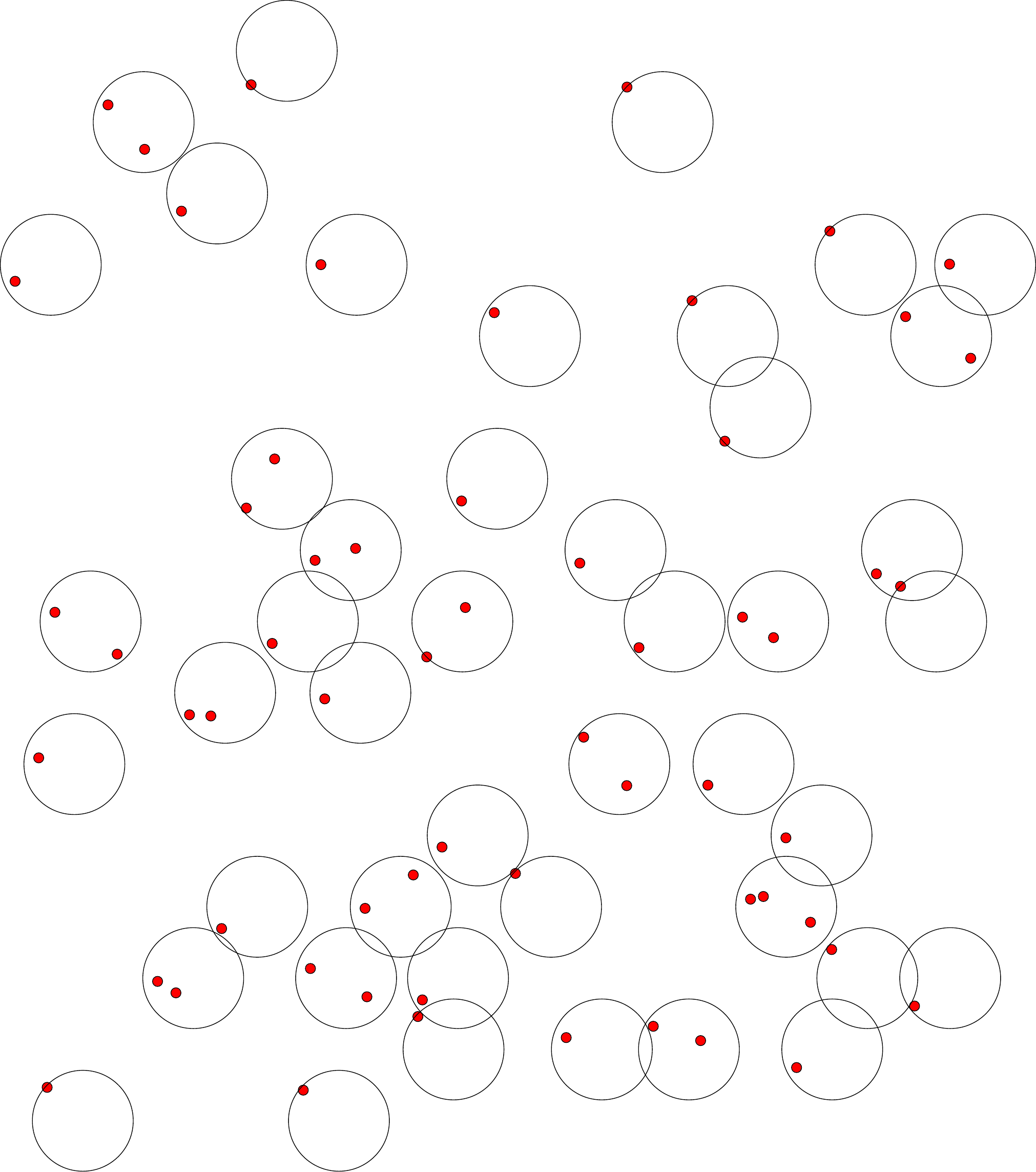}\hspace{50pt}	\includegraphics[scale=0.33]{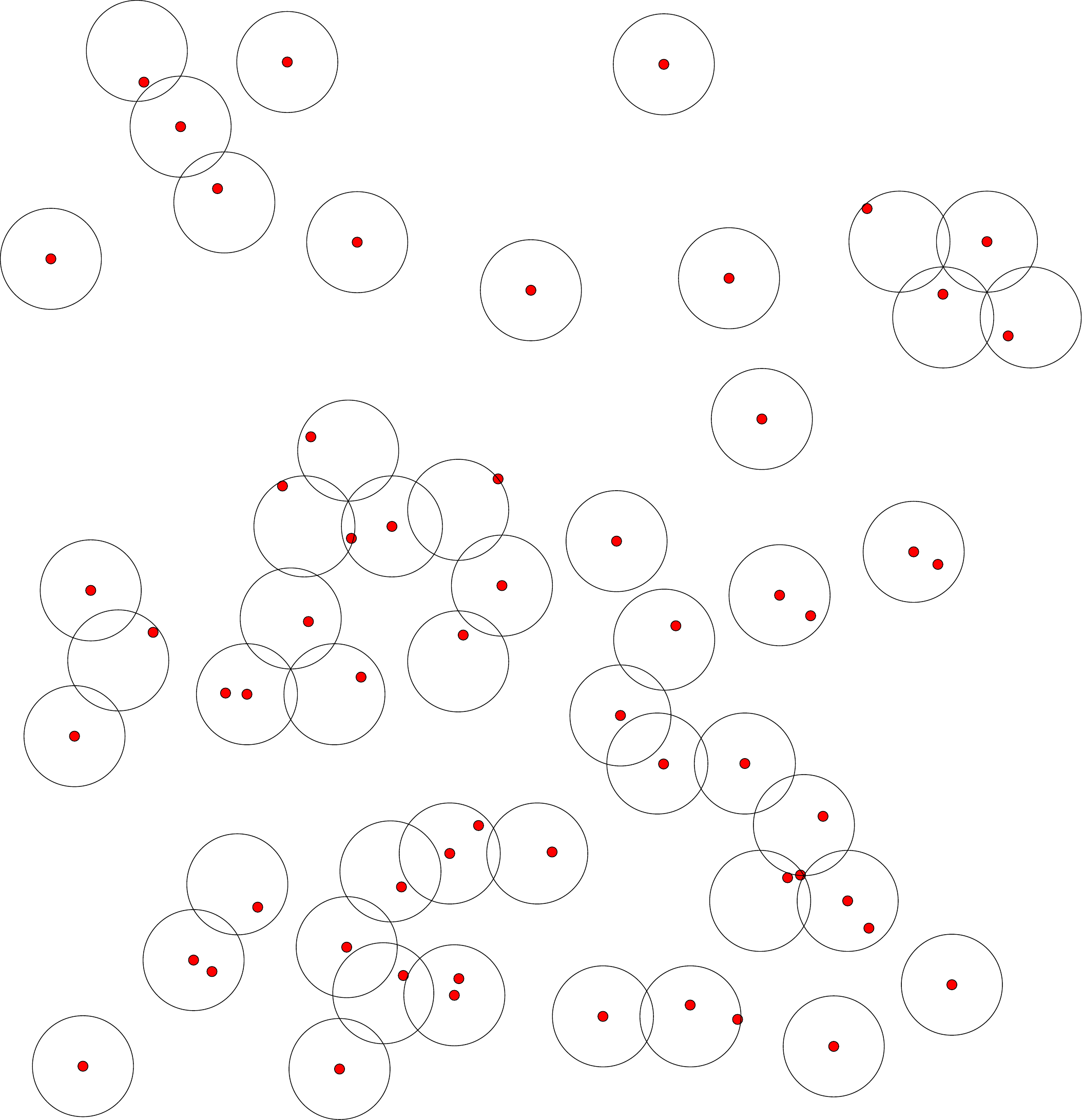}
	\caption{The outputs produced by G-1991(left) and CCFM-1997(right) on the pointset shown in Fig.~\ref{fig:demopointset}.}
\end{figure}

\begin{figure}[h]
	\centering
	\includegraphics[scale=0.33]{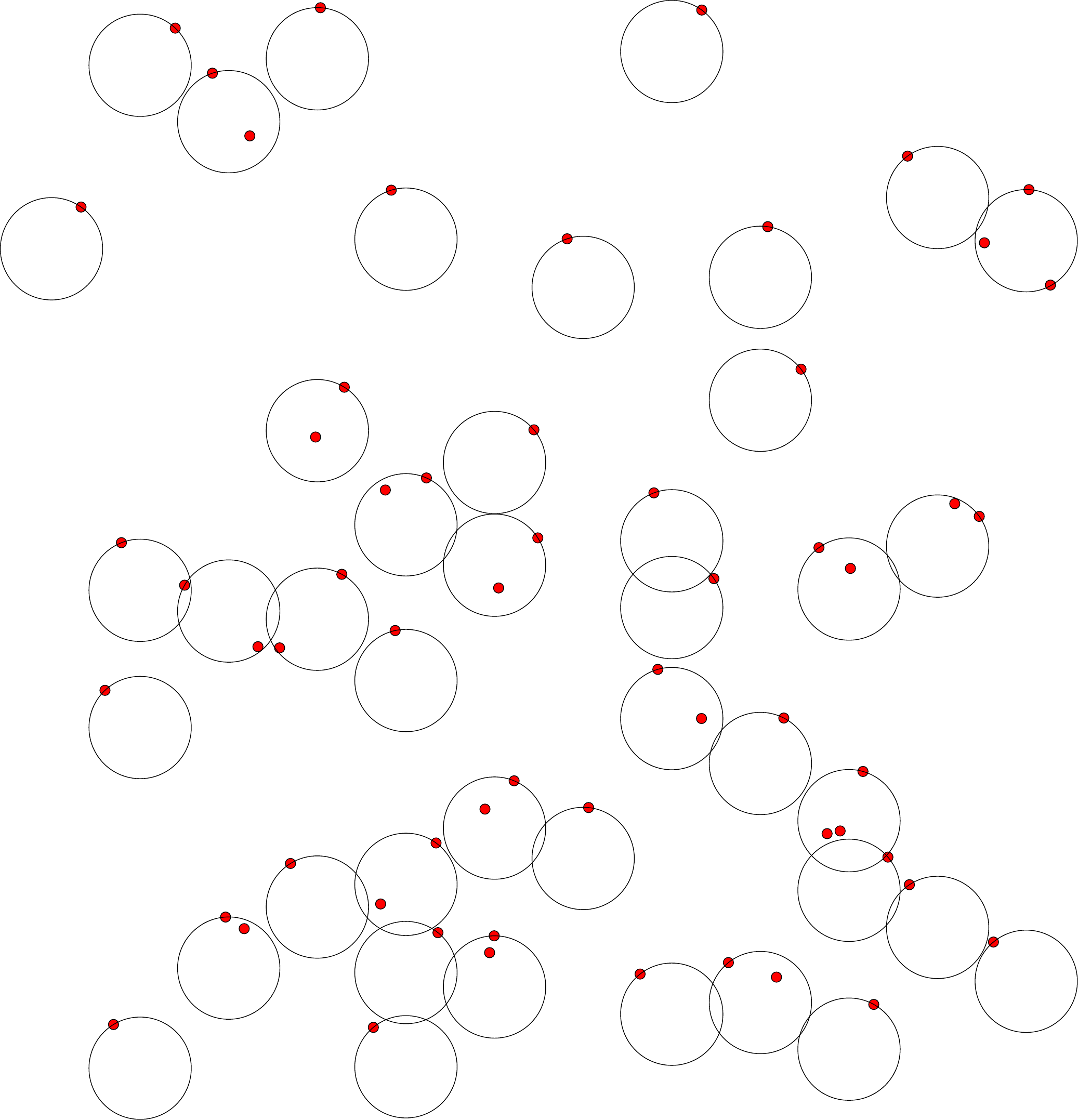}\hspace{50pt}	\includegraphics[scale=0.33]{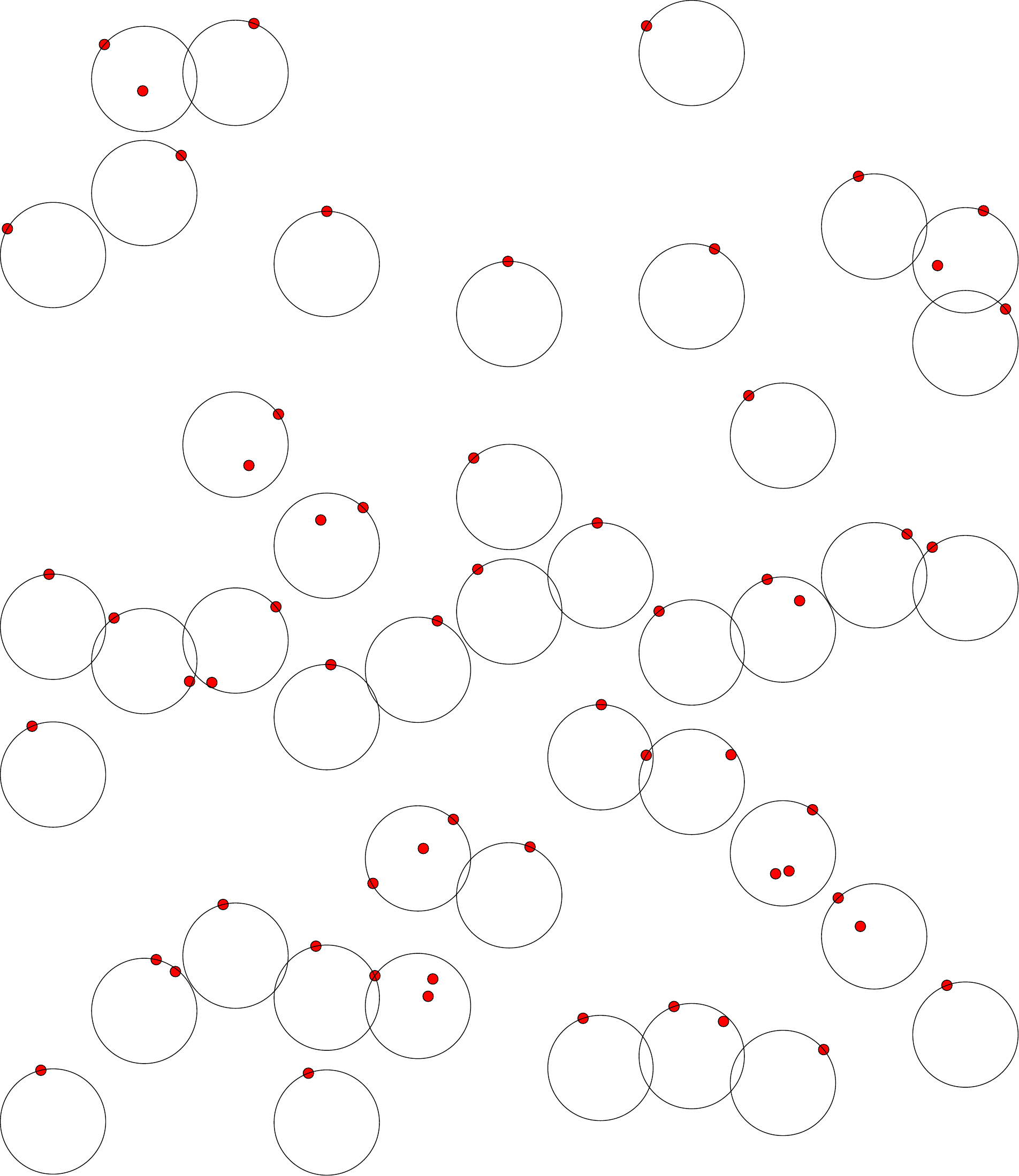}
	\caption{The outputs produced by LL-2014(left) and LL-2014-1P(right) on the pointset shown in Fig.~\ref{fig:demopointset}.}
\end{figure}

\begin{figure}[h]
	\centering
	\includegraphics[scale=0.33]{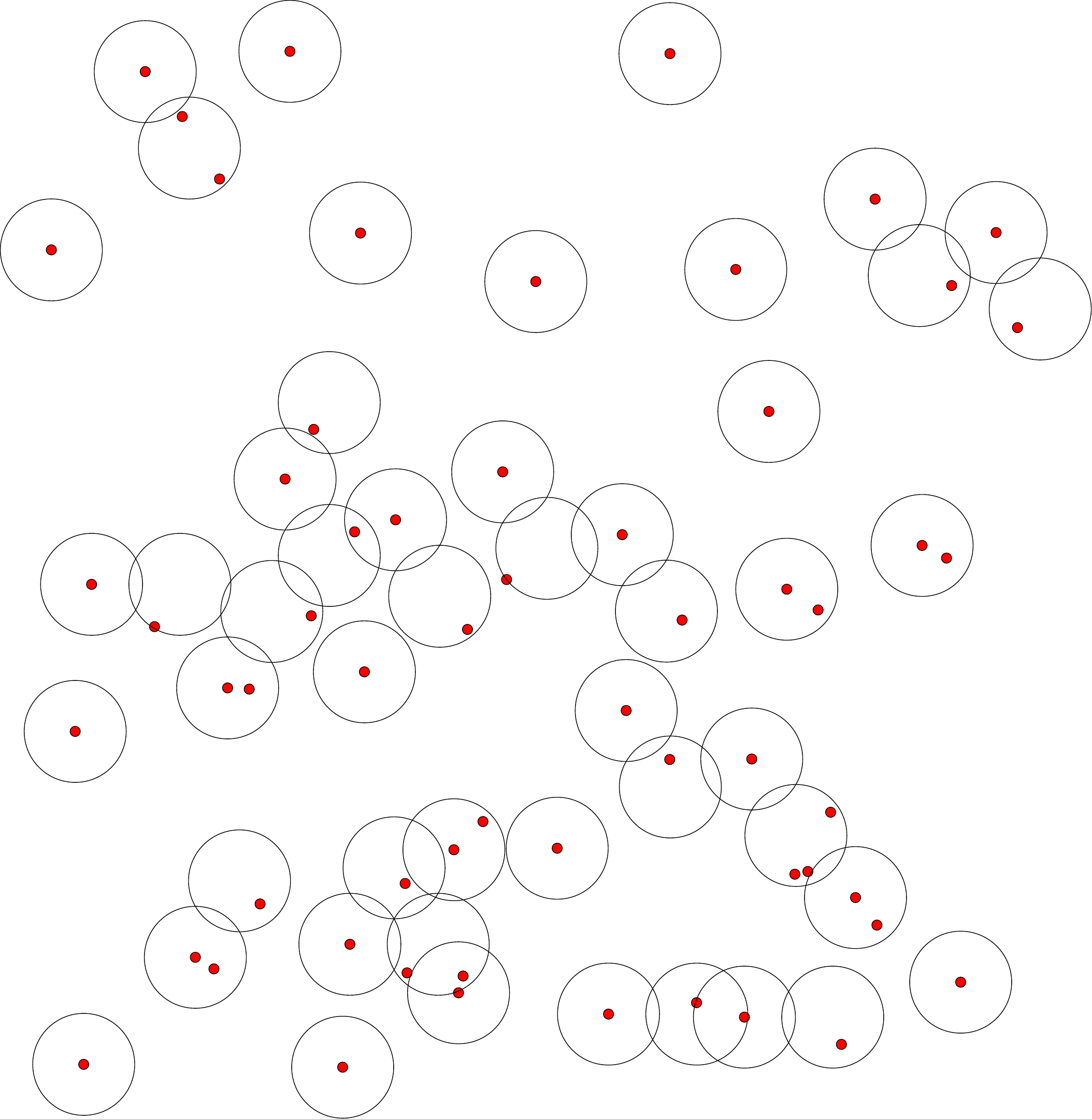}\hspace{50pt}	\includegraphics[scale=0.33]{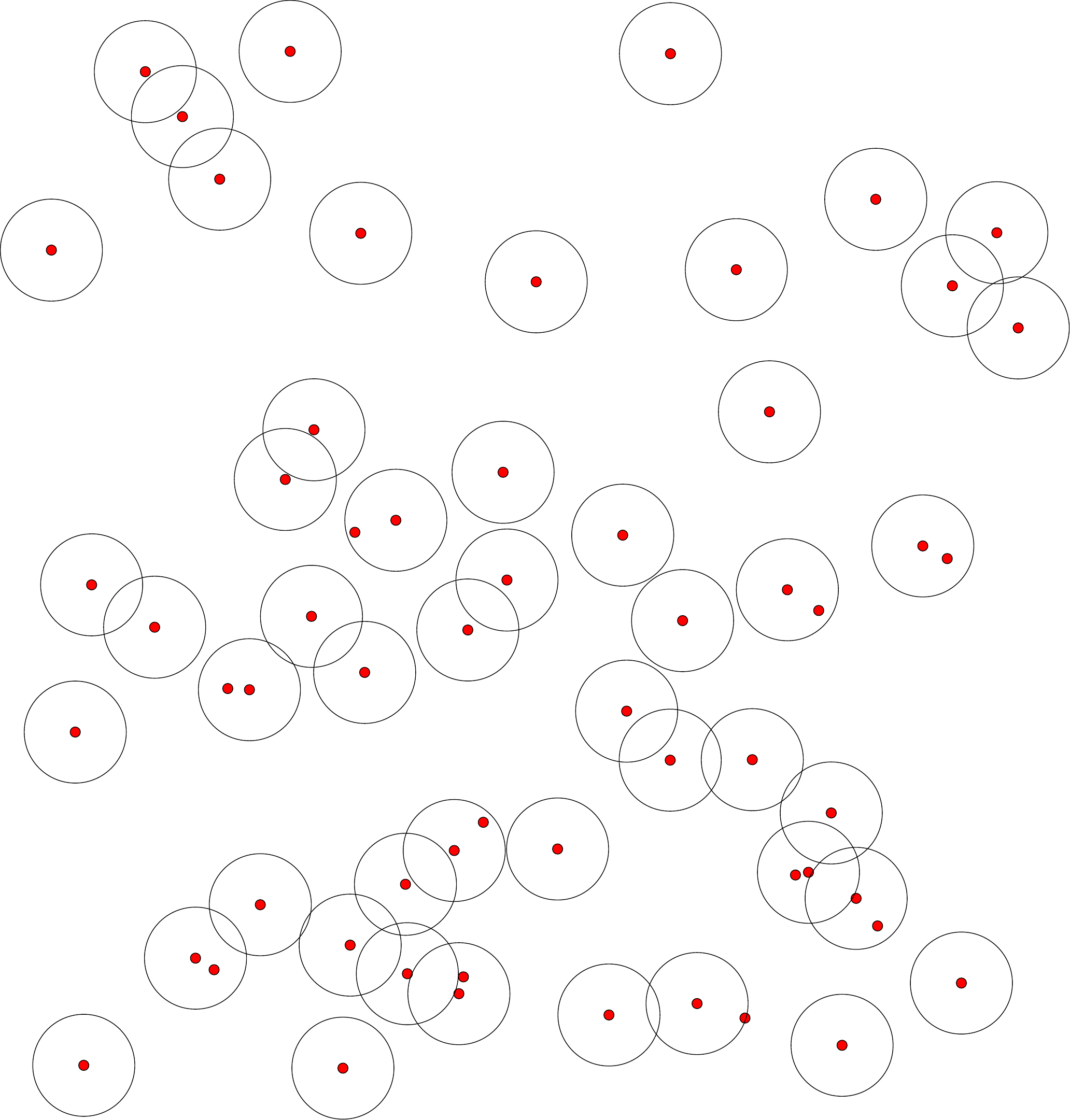}
	\caption{The  outputs produced by BLMS-2017(left) and DGT-2018(right) on the pointset shown in Fig.~\ref{fig:demopointset}.}
\end{figure}

\begin{figure}[h]
	\centering
	\includegraphics[scale=0.33]{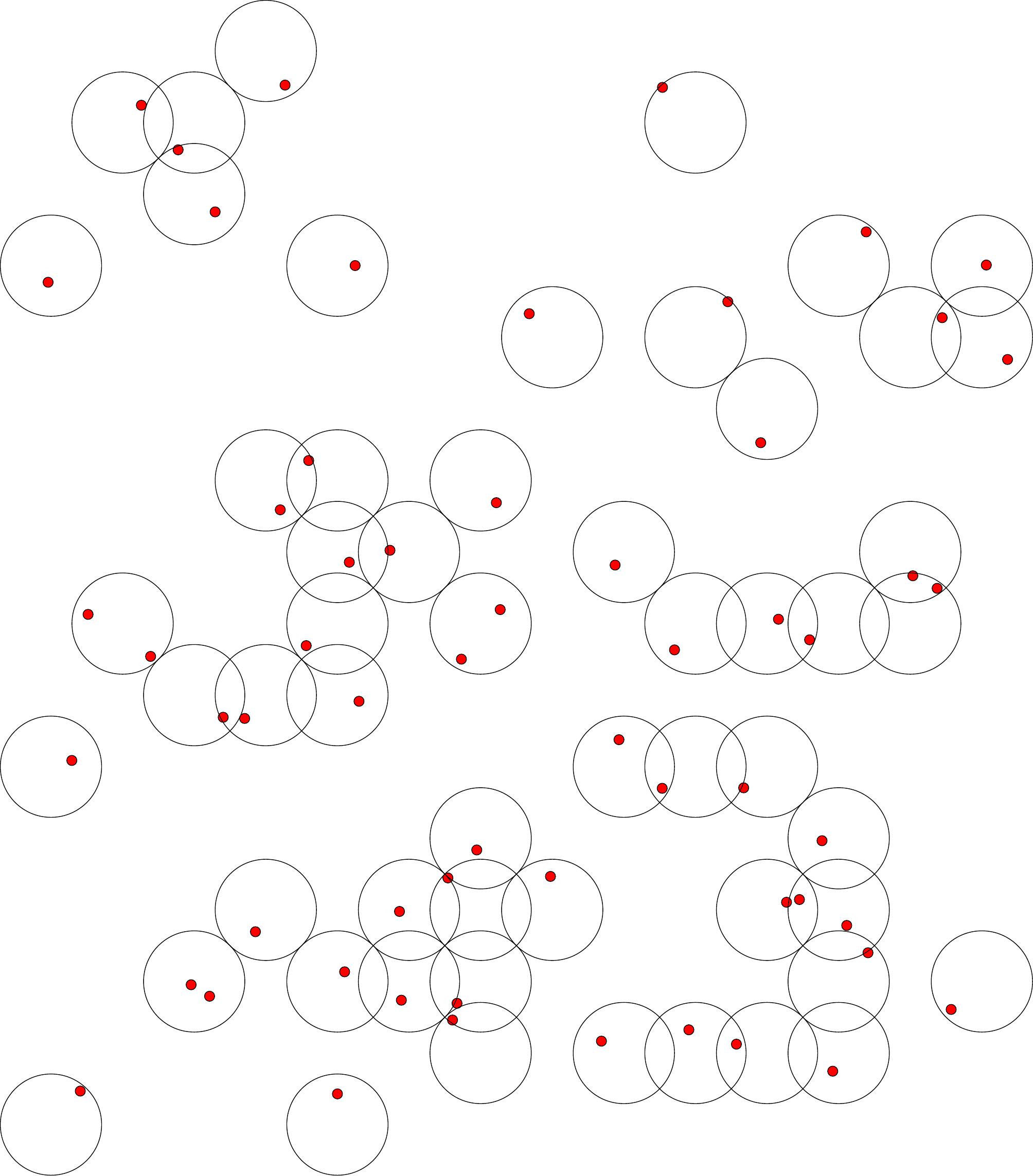}\hspace{50pt}	\includegraphics[scale=0.33]{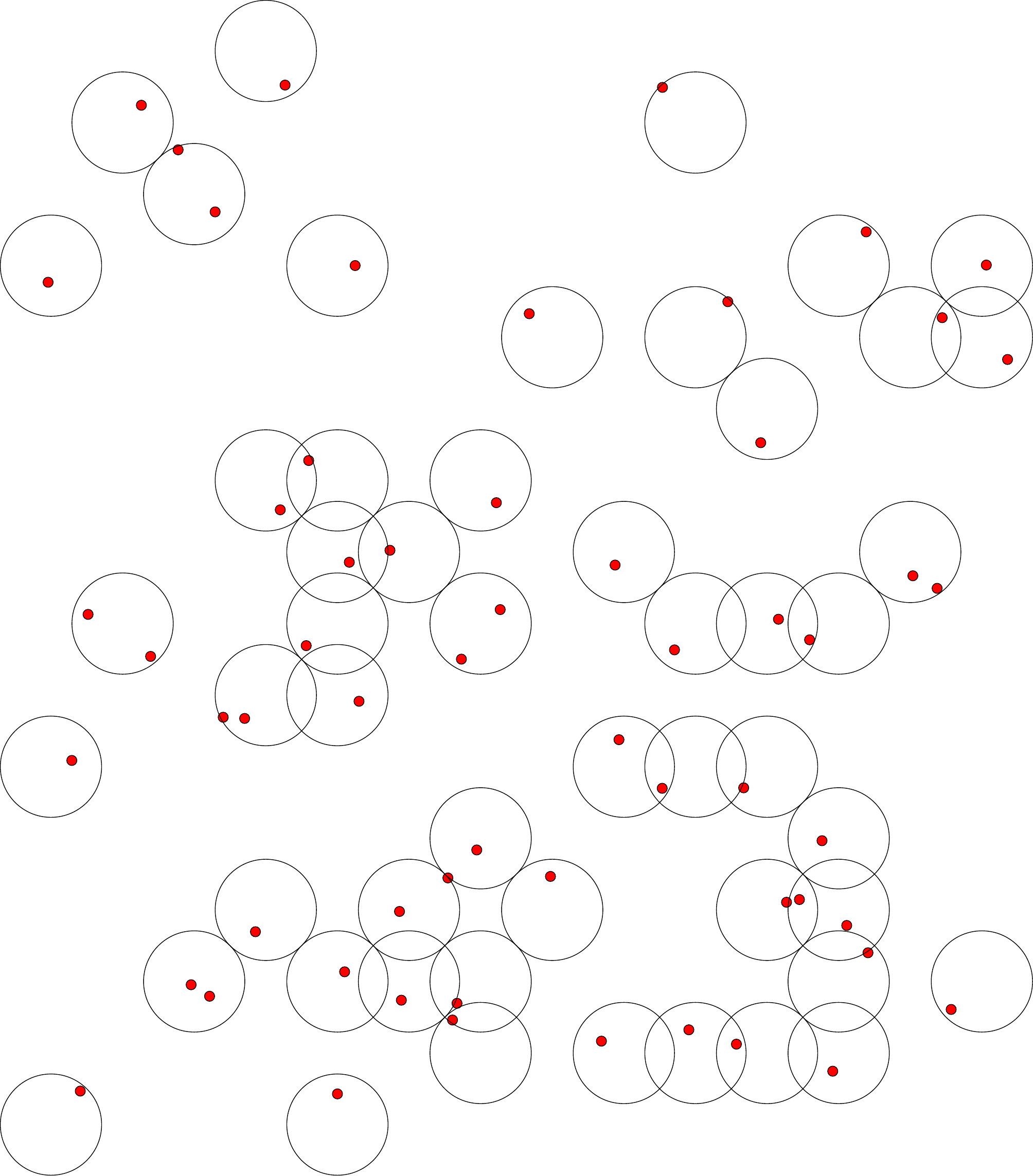}
\caption{The  outputs produced by  \textsc{FastCover}(left) and  \textsc{FastCover\texttt{+}}(right) on the pointset shown in Fig.~\ref{fig:demopointset}.}
\end{figure}

\begin{figure}[h]
	\centering
	\includegraphics[scale=0.33]{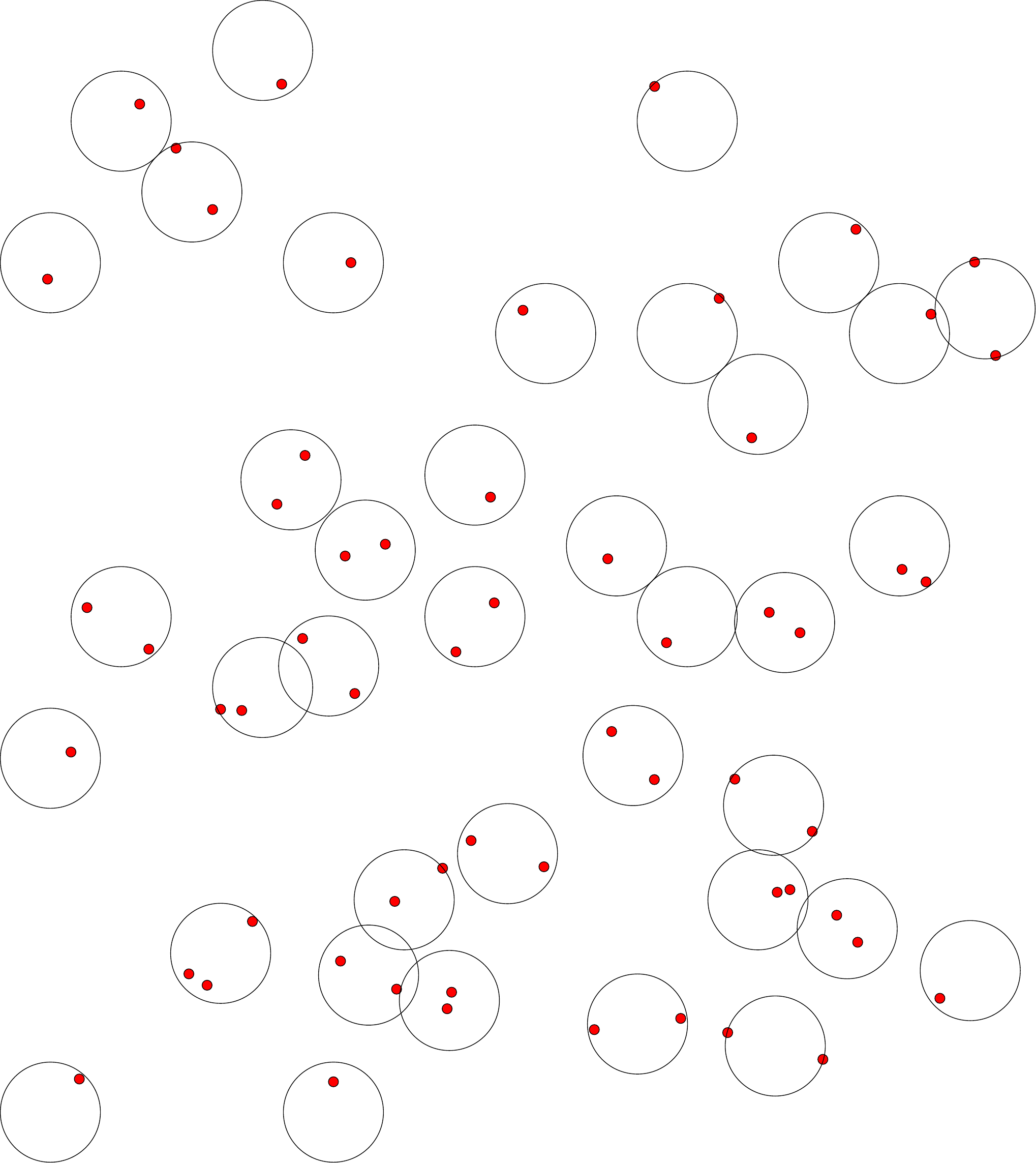}
	\caption{The output produced by \textsc{FastCover\texttt{++}} on the pointset shown in Fig.~\ref{fig:demopointset}.}
\end{figure}

\newpage
\clearpage

\bibliographystyle{spmpsci.bst}
\bibliography{UDCbib}

\end{document}